\documentclass[11pt]{article}
\usepackage[margin=1in]{geometry}
\usepackage{amsmath,amssymb,amsthm}
\usepackage{aliascnt}
\usepackage{microtype}
\usepackage[hidelinks]{hyperref}
\usepackage{xcolor} 

\newtheorem{theorem}{Theorem}[section]
\newaliascnt{lemma}{theorem}
\newtheorem{lemma}[lemma]{Lemma}
\aliascntresetthe{lemma}
\newaliascnt{proposition}{theorem}
\newtheorem{proposition}[proposition]{Proposition}
\aliascntresetthe{proposition}
\newaliascnt{corollary}{theorem}
\newtheorem{corollary}[corollary]{Corollary}
\aliascntresetthe{corollary}

\newcommand{\cM}{\mathcal M}
\newcommand{\Pf}{\operatorname{Pf}}

\newcommand{\crs}{\operatorname{cr}}
\newcommand{\rev}{\operatorname{rev}}
\newcommand{\coRP}{\mathsf{coRP}}
\newcommand{\coNP}{\mathsf{coNP}}

\newcommand{\x}{\mathbf{x}}

\title{Pfaffian Recognition is in $\mathsf{coRP}$}

\author{
Priyanshu Pant\thanks{Email: \texttt{priyanshupant03@gmail.com}}
\qquad
Ranveer Singh\thanks{Email: \texttt{ranveer@iiti.ac.in}}\\[0.5ex]
Department of Computer Science and Engineering\\
Indian Institute of Technology Indore
}

\date{}

\begin{document}

\maketitle

\begin{abstract}
An orientation of a graph is called Pfaffian if all perfect matchings occur with the same sign in its Pfaffian expansion. A graph is Pfaffian if it admits such an orientation. Given a Pfaffian orientation, the number of perfect matchings can be computed in polynomial time. In 1963, Kasteleyn showed that every planar graph is Pfaffian. In 1989, Vazirani and Yannakakis showed that deciding whether a graph is Pfaffian is polynomial time equivalent to testing whether a given orientation is Pfaffian. For bipartite graphs, polynomial time recognition algorithms were obtained by Robertson, Seymour, and Thomas in 1999, and independently by McCuaig in 2004. Whether a deterministic polynomial time algorithm exists for general graphs remains open.

In this paper, we first construct a polynomial that is identically zero if and only if a given orientation is Pfaffian, and show that this polynomial can be evaluated in polynomial time. Combined with the reduction of Vazirani and Yannakakis, this gives a randomized polynomial time algorithm for deciding whether a general graph is Pfaffian, placing the problem in \(\mathsf{coRP}\). We also prove that an orientation of a matching covered graph is non-Pfaffian if and only if reversing the edges of some cut makes it strongly connected. Although deciding whether such a cut exists is NP-complete for general digraphs, our characterization yields a randomized polynomial time algorithm for matching covered graphs and deterministic polynomial time algorithms for bipartite and planar matching covered graphs.
\end{abstract}

\section{Introduction}
\label{sec:introduction}

A \emph{perfect matching} of a graph is a set of pairwise disjoint edges that covers every vertex exactly once. Counting perfect matchings is a central problem in combinatorics and counting complexity~\cite{LucchesiMurty2024,LovaszPlummer1986}. Although the existence of a perfect matching can be decided in polynomial time~\cite{Edmonds1965}, Valiant proved that counting perfect matchings is \(\#\mathsf{P}\)-complete, even for bipartite graphs~\cite{Valiant1979}.

All graphs considered in this paper are finite and simple. A graph is \emph{matching covered} if it is connected, has at least one edge, and every edge belongs to some perfect matching. For a graph with a perfect matching, deleting all edges that belong to no perfect matching leaves a disjoint union of matching covered graphs without changing its set of perfect matchings. Thus, it suffices to consider matching covered graphs when counting perfect matchings. For a graph \(G\), we write \(V(G)\) and \(E(G)\) for its vertex and edge sets, and \(\mathcal M(G)\) for its set of perfect matchings. Let \(G\) have \(2n\) vertices, and assign a variable \(x_e\) to each edge \(e\in E(G)\). The \emph{perfect matching polynomial} of \(G\) is
\[
    \operatorname{PerfMatch}(G) = \sum_{M\in\mathcal M(G)} \prod_{e\in M}x_e .
\]
Setting all variables \(x_e=1\) gives the number of perfect matchings of \(G\).

A classical algebraic approach to counting perfect matchings uses Pfaffian orientations. An \emph{orientation} \(D\) of \(G\) is a choice of direction for each edge. Fix an ordering \(1,2,\ldots,2n\) of the vertices of \(G\). If \(e=ij\in E(G)\), we write \(x_e\) and \(x_{ij}\) interchangeably for the corresponding edge variable. The \emph{weighted skew-adjacency matrix} of \(G\) with respect to \(D\), denoted by \(A_D\), is the \(2n\times2n\) matrix whose \((i,j)\)-entry is
\[
    (A_D)_{ij} = \begin{cases}
        x_{ij},&\text{if }ij\in E(G)\text{ is oriented from }i\text{ to }j,\\
        -x_{ij},&\text{if }ij\in E(G)\text{ is oriented from }j\text{ to }i,\\
        0,&\text{if }ij\notin E(G).
    \end{cases}
\]
Let \(M=\{e_1,e_2,\ldots,e_n\}\) be a perfect matching of \(G\), and suppose that in the orientation \(D\), the tail and the head of \(e_i\) are \(u_i\) and \(v_i\), respectively. The \emph{permutation of \(M\) in \(D\)}, denoted by \(\pi_D(M)\), is
\[
    \pi_D(M) = \begin{pmatrix}
        1&2&3&4&\cdots&2n-1&2n\\
        u_1&v_1&u_2&v_2&\cdots&u_n&v_n
    \end{pmatrix}.
\]
The \emph{sign of \(M\) with respect to \(D\)}, denoted by \(\operatorname{sign}_D(M)\), is the sign of \(\pi_D(M)\). This value is independent of the order in which the edges of \(M\) are enumerated. The Pfaffian of \(A_D\) is
\[
    \operatorname{Pf}(A_D) = \sum_{M\in\mathcal M(G)} \operatorname{sign}_D(M) \prod_{e\in M}x_e .
\]
It contains exactly the same monomials as \(\operatorname{PerfMatch}(G)\), with signs determined by the orientation. If all perfect matchings have the same sign with respect to \(D\), then
\[
    \operatorname{Pf}(A_D) = \pm\operatorname{PerfMatch}(G).
\]
Such an orientation is called a \emph{Pfaffian orientation}, and a graph admitting one is called \emph{Pfaffian}. Once a Pfaffian orientation is known, the number of perfect matchings can be computed in polynomial time using Cayley's identity~\cite{Cayley1849},
\[
    \det(A)=\operatorname{Pf}(A)^2,
\]
which holds for every skew-symmetric matrix \(A\) of even order, where \(\operatorname{Pf}(A)\) is defined by viewing \(A\) as the weighted skew-adjacency matrix of the underlying graph. This motivates the \emph{Pfaffian recognition problem}, which asks whether a given graph admits a Pfaffian orientation.

Pfaffian orientations originate in Kasteleyn's work on the dimer problem. In 1963, Kasteleyn showed that every planar graph admits a Pfaffian orientation and gave a polynomial time algorithm for finding one, yielding a polynomial time algorithm for counting perfect matchings in planar graphs~\cite{Kasteleyn1963,Kasteleyn1967}. Since Pfaffian graphs need not be planar, characterizing them and finding Pfaffian orientations efficiently have been extensively studied, as such orientations allow perfect matchings to be counted in polynomial time~\cite{Thomas2006,LucchesiMurty2024}.

For bipartite graphs, deciding whether a Pfaffian orientation exists is equivalent to several classical problems, including P\'olya's permanent problem, recognizing sign-nonsingular matrices, and detecting even directed cycles~\cite{RST1999,McCuaig2004,Thomas2006}. In 1975, Little~\cite{Little1975} proved that a bipartite graph with a perfect matching is Pfaffian if and only if it contains no even subdivision of \(K_{3,3}\) whose deletion leaves a graph with a perfect matching. In 1999, Robertson, Seymour, and Thomas~\cite{RST1999}, and independently McCuaig in 2004~\cite{McCuaig2004}, obtained structural characterizations which lead to polynomial time algorithms for deciding whether a bipartite graph is Pfaffian. Björklund, Husfeldt, and Kaski~\cite{BjorklundHusfeldtKaski2024} gave an algebraic characterization of the existence of even directed cycles, a problem equivalent to Pfaffian recognition for bipartite graphs.

Considerable progress has also been made for some nonbipartite graph classes. Fischer and Little~\cite{FischerLittle2001} characterized Pfaffian near-bipartite graphs, which are nonbipartite matching covered graphs that become bipartite and remain matching covered after deleting two specific edges. Miranda and Lucchesi~\cite{MirandaLucchesi2010} later gave a polynomial time Pfaffian recognition algorithm for weakly near-bipartite graphs, a larger class that includes near-bipartite graphs. Carvalho, Lucchesi, and Murty~\cite{CarvalhoLucchesiMurty2012} extended Little's characterization to solid matching covered graphs, a class that includes bipartite matching covered graphs.

For general graphs, however, the problem has proved much more difficult. Norine and Thomas~\cite{NorineThomas2008} constructed an infinite family of minimal non-Pfaffian graphs with respect to matching minors, showing that the forbidden obstruction characterization for bipartite graphs does not extend directly to general graphs. Despite the progress on special graph classes, the existence of a deterministic polynomial time algorithm for recognizing general Pfaffian graphs remains open; see~\cite{Thomas2006,LucchesiMurty2024}.

A major insight into the Pfaffian recognition problem came from Vazirani and Yannakakis in 1989~\cite{VaziraniYannakakis1989}. Given a graph \(G\), they construct in polynomial time an orientation \(D^\star\) such that \(G\) admits a Pfaffian orientation if and only if \(D^\star\) is Pfaffian. Using this reduction, they also established that deciding whether a graph is Pfaffian belongs to \(\coNP\). Recall that a decision problem is in \(\coNP\) if every no instance has a polynomial-size certificate that can be verified in polynomial time. If \(G\) is not Pfaffian, then \(D^\star\) has two perfect matchings with different signs. Such a pair provides a polynomial-size certificate, since \(D^\star\) can be constructed in polynomial time, and we can verify that both are perfect matchings with opposite signs.

However, testing whether a given orientation is Pfaffian is not straightforward, since a graph may have exponentially many perfect matchings. In this paper, we give a randomized polynomial time algorithm for this problem using an algebraic approach, establishing the following result.

\begin{theorem}
\label{thm:main}
Pfaffian recognition is in \(\coRP\).
\end{theorem}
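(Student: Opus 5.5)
The plan is to reduce Pfaffian recognition to testing whether a given orientation is Pfaffian, and then to test that property with a polynomial identity test. Given \(G\), first decide whether \(G\) has a perfect matching using Edmonds' algorithm~\cite{Edmonds1965}; if it has none, \(G\) is trivially Pfaffian. Otherwise, apply the Vazirani--Yannakakis construction~\cite{VaziraniYannakakis1989} to get, in polynomial time, an orientation \(D^\star\) such that \(G\) is Pfaffian if and only if \(D^\star\) is Pfaffian. It then suffices to give a randomized polynomial time test for ``\(D\) is Pfaffian'' with one-sided error. The test must always accept when \(D\) is Pfaffian, and reject with probability at least \(1/2\) otherwise.

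For the algebraic test I would fix one perfect matching \(M_0\) and set \(s=\operatorname{sign}_D(M_0)\). The aim is to build a polynomial \(P_D\) in the edge variables (possibly with auxiliary variables) with three properties. First, \(P_D\equiv 0\) exactly when every \(M\in\cM(G)\) has \(\operatorname{sign}_D(M)=s\). Second, \(P_D\) has total degree polynomial in \(n\). Third, \(P_D\) can be evaluated at any point of a large finite field in polynomial time. Given such a \(P_D\), the Schwartz--Zippel lemma finishes the argument. Evaluate \(P_D\) at a uniformly random point of \(S^{|E(G)|}\), where \(S\) is a set of size at least \(2\deg P_D\). If \(D\) is Pfaffian the value is always \(0\), and otherwise it is nonzero with probability at least \(1/2\). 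This places the problem in \(\coRP\).

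The key step, and the main obstacle, is constructing \(P_D\). The natural candidate \(s\,\Pf(A_D)-\operatorname{PerfMatch}(G)\) has the right zero set, but evaluating \(\operatorname{PerfMatch}\) is \(\#\mathsf P\)-hard. So the unsigned sum has to be avoided, using only Pfaffians and determinants, which are computable by Cayley's identity and Gaussian elimination. My first attempt would exploit the structure of pairs of matchings. For \(M,N\in\cM(G)\), the product \(\operatorname{sign}_D(M)\operatorname{sign}_D(N)\) factors over the alternating cycles of \(M\triangle N\): each cycle contributes \(-1\) or \(+1\) according to whether it is evenly or oddly oriented. Consequently, \(D\) is Pfaffian if and only if every \(M_0\)-alternating cycle \(C\) with \(G-V(C)\) having a perfect matching is oddly oriented.

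I would encode these cycles by a second, modified skew-symmetric matrix \(A'_D\). It is obtained from \(A_D\) by introducing fresh variables \(y_e\) on the edges of \(M_0\) and re-signing entries so that every alternating cycle is counted with sign \(+1\). I would then take \(P_D\) to be a difference of Pfaffians, for instance \(s\,\Pf(A_D)\) specialized suitably minus \(\Pf(A'_D)\). The point to verify is that in this difference all pairs whose cycles are oddly oriented cancel, while every evenly oriented nice cycle contributes a monomial that nothing else can cancel. Uniqueness of that monomial should follow from tagging each cycle by its distinct variables \(x_e\) and \(y_e\).

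The delicate points are twofold: the re-signing must itself be computable from \(D\) in polynomial time, and no cancellation may occur among distinct non-Pfaffian witnesses. If the single-matrix approach fails, my fallback is a recursion. Use the expansion \(\Pf(A)=\sum_j \pm a_{1j}\Pf(A_{\hat 1\hat j})\) and test, one edge \(1j\) at a time, whether the sign of the coefficient of \(x_{1j}\) is consistent with \(s\). Each such test is again an identity test on a Pfaffian of a smaller minor, and a union bound over the polynomially many tests would keep the overall error bounded.
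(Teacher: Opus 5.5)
\textbf{There is a genuine gap.} Your outer frame matches the paper's: dispose of graphs without a perfect matching, pass to the Vazirani--Yannakakis orientation $D^\star$, and apply Schwartz--Zippel to a polynomial whose zero set is exactly ``$D$ is Pfaffian''. But you do not supply the step that carries all the content, an efficiently evaluable $P_D$, and neither of your candidate constructions can be completed.

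Re-signing the entries of a skew-symmetric matrix supported on $E(G)$ is just choosing another orientation $D'$ of $G$. Fresh variables $y_e$ on $M_0$ only rename weights. Moreover, $\operatorname{sign}_{D'}(M)\operatorname{sign}_{D'}(M_0)=(-1)^k$, where $k$ is the number of evenly oriented cycles of $M\triangle M_0$ in $D'$. So a re-signing under which every $M_0$-alternating cycle counts with $+1$ is precisely a Pfaffian orientation. It does not exist in the non-Pfaffian case your test must catch. In the Pfaffian case, producing it in polynomial time is the search version of the problem you are trying to solve.

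The fallback is circular as well. An identity test on $\Pf(A_{\hat 1\hat j})$ only detects whether $G-\{1,j\}$ has a perfect matching, since distinct matchings give distinct monomials. Whether all its terms carry one sign is the original question for the restriction of $D$ to $G-\{1,j\}$. Recursing vertex by vertex then generates subproblems $G-V(N)$ for exponentially many partial matchings $N$, so there is no polynomial number of tests to which a union bound could apply.

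The missing idea is to work over a field of characteristic two, where the $\#\mathsf P$-hardness obstacle you identified disappears. There $-1=1$ and squaring is additive, so Cayley's identity gives $\det A_D=\Pf(A_D)^2=\sum_{M\in\cM(G)}\prod_{e\in M}x_e^2$. In other words, the squared unsigned matching polynomial is a determinant. The signs, now invisible in the matrix, are reintroduced combinatorially through the formula $\operatorname{sign}_D(M)=(-1)^{\crs(M)+\rev_D(M)}$ of Lemma~\ref{lem:sign}. Its exponent is a sum over the backward edges and the crossing pairs of edges of $M$.

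Hence the indicator of ``$M$ has sign opposite to $F$'' is the $\mathbb F_2$-value $s+\sum_{e\in R_D}[e\in M]+\sum_{\{e,f\}\in\mathcal C(G)}[e,f\in M]$. Since $\sum_{M\ni e,f}\prod_{a\in M}x_a^2=x_e^2x_f^2\det(A_D-e-f)$, and similarly for a single edge, the wrong-sign polynomial $P_D=\sum_{\operatorname{sign}_D(M)\ne\operatorname{sign}_D(F)}\prod_{e\in M}x_e^2$ satisfies
\[
P_D=s\det(A_D)+\sum_{e\in R_D}x_e^2\det(A_D-e)+\sum_{\{e,f\}\in\mathcal C(G)}x_e^2x_f^2\det(A_D-e-f).
\]
This is a polynomial-size combination of principal minors, of degree $2n$. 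Its monomials are in bijection with the wrong-sign matchings, so no cancellation question arises. Schwartz--Zippel over $\mathbb F_{2^t}$ with $2^t\ge 4n$ then finishes the proof.
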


Recall that a decision problem belongs to \(\coRP\) if there is a randomized polynomial time algorithm that always accepts yes instances and rejects no instances with probability at least \(1/2\). For Pfaffian recognition, given a graph \(G\), we first use the reduction of Vazirani and Yannakakis to construct the orientation \(D^\star\). We then run our randomized test on \(D^\star\). If \(G\) is Pfaffian, then \(D^\star\) is Pfaffian and our test always accepts. If \(G\) is not Pfaffian, then \(D^\star\) is non-Pfaffian and our test rejects with probability at least \(1/2\). Repeating the test independently reduces the probability of error exponentially.

The proof of Theorem~\ref{thm:main} reduces testing whether a prescribed orientation is Pfaffian to polynomial identity testing over a finite field of characteristic two. Given a graph \(G\) with a perfect matching and an orientation \(D\), we fix a perfect matching and define a polynomial \(P_D(\mathbf{x})\) such that
\[
    D\text{ is Pfaffian} \quad\Longleftrightarrow\quad P_D(\mathbf{x})\equiv 0.
\]
The polynomial \(P_D\) records the perfect matchings whose sign is opposite to that of the fixed perfect matching. We express \(P_D\) in terms of determinants of principal submatrices of \(A_D\), allowing it to be evaluated in polynomial time. Applying the Schwartz--Zippel lemma gives a randomized test that always accepts Pfaffian orientations and rejects non-Pfaffian orientations with probability at least \(1/2\). We further show that each trial can be implemented using \(O(|V(G)|^3)\) field operations. We also show that a nonzero evaluation of \(P_D\) can be used to recover two perfect matchings of opposite signs in polynomial time.

A subgraph \(H\) of \(G\) is called \emph{conformal} if \(G-V(H)\) has a perfect matching. An even cycle is \emph{oddly oriented} if an odd number of its edges are directed along either direction of traversal, and \emph{evenly oriented} otherwise. If two perfect matchings have opposite signs, then their symmetric difference contains an evenly oriented conformal cycle. In fact, a classical characterization states that an orientation \(D\) is Pfaffian if and only if every conformal even cycle is oddly oriented~\cite{LovaszPlummer1986}.

We give a structural characterization of Pfaffian orientations in terms of strong connectivity. A directed graph is \emph{strongly connected} if every vertex can be reached from every other vertex by a directed path. For \(X\subseteq V(G)\), let \(\partial(X)\) denote the set of edges with exactly one endpoint in \(X\). Given an orientation \(D\) of \(G\), let \(D^X\) denote the orientation obtained by reversing every edge in \(\partial(X)\). Reversing the edges of a cut preserves Pfaffianity, that is, \(D\) is Pfaffian if and only if \(D^X\) is Pfaffian~\cite{CarvalhoLucchesiMurty2005}. For matching covered graphs, we establish the following characterization.

\begin{theorem}
\label{thm:structural}
Let \(G\) be a matching covered graph and let \(D\) be an orientation of \(G\). Then \(D\) is non-Pfaffian if and only if there exists \(X\subseteq V(G)\) such that \(D^X\) is strongly connected.
\end{theorem}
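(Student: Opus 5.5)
We prove both directions separately, using throughout the facts recalled just before the statement. First, $D$ is Pfaffian iff every conformal even cycle is oddly oriented. Second, $D$ is Pfaffian iff $D^X$ is Pfaffian. Third, reversing the cut $\partial(X)$ reverses an even number of edges of any cycle, since a cycle crosses $\partial(X)$ an even number of times.

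\emph{($\Rightarrow$).} Suppose $D$ is non-Pfaffian. Then there is a conformal even cycle $C$ that is evenly oriented in $D$. Fix a direction of traversal of $C$. The number of edges of $C$ pointing backward is even.

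Choose $X_0\subseteq V(C)$ by walking around $C$ and toggling membership at each vertex so as to fix the next edge. Toggling a vertex of $C$ flips the two edges of $C$ at that vertex, so each toggle flips the edge just fixed together with the next one, and only the parity of backward edges is an invariant. That parity is even, so all edges of $C$ can be made forward. Hence $C$ is a directed cycle in $D^{X_0}$.

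Next, invoke the Lovász--Plummer ear decomposition of the matching covered graph $G$ starting from the conformal even cycle $C$. Every ear is an odd path whose internal vertices are new, and ears are added one or two at a time so that each intermediate graph is matching covered.

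We process the ears in order, maintaining the invariant that the current subgraph is strongly connected in $D^{X}$. Let $P=a\,p_1\cdots p_{k-1}\,b$ be an ear with $k$ odd. Toggling an internal vertex $p_i$ flips the two consecutive ear edges at $p_i$, and it affects no earlier edges because $p_i$ is new. Edges from $p_i$ to later ears are fixed when those ears are processed. Toggling preserves the parity of the number of edges of $P$ that point against the direction $a\to b$. If that number is even, we make $P$ a directed $a\to b$ path. If it is odd, then the number of edges pointing against $b\to a$ is $k$ minus an odd number, which is even because $k$ is odd, so we make $P$ a directed $b\to a$ path. Adding a directed path between two vertices of a strongly connected digraph keeps it strongly connected. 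An ear with $k=1$ is a single edge, which can be oriented either way without breaking strong connectivity. Double ears are handled one path at a time. At the end, $D^X$ is strongly connected.

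\emph{($\Leftarrow$).} Since $D^X$ is Pfaffian iff $D$ is, it suffices to show that a strongly connected orientation $D$ of a matching covered graph is non-Pfaffian. A directed even cycle is evenly oriented, so it is enough to find a conformal directed even cycle. The natural place to look is a directed $M$-alternating cycle for some perfect matching $M$, since such a cycle is automatically conformal and even.

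Fix $M$ and form an auxiliary digraph $H$ on the edges of $M$, each oriented as in $D$. Put an arc $e\to f$ in $H$ when some non-matching edge goes from the head of $e$ to the tail of $f$. A directed cycle of $H$ yields a directed closed $M$-alternating walk, and a shortest such walk should be a genuine $M$-alternating cycle.

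\emph{Main obstacle.} The hard step is proving that $H$, or $H'$ for a suitable choice of $M'$, contains a directed cycle. This requires strong connectivity of $D$ to interact with matching coveredness, because a plain directed cycle in $D$ need not alternate. I would try two approaches.

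The first is induction along an ear decomposition. Remove the last ear, and observe that either the smaller graph stays strongly connected in the restricted orientation, or else the ear is the only return path. In the latter case, combine it with an $M$-alternating path in the smaller matching covered graph to close up a directed alternating cycle.

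The second approach is by contradiction. Assume $D$ is Pfaffian. Then $H$ is acyclic for every $M$, so one can take a topological order of the matching edges. Using that every edge lies in a perfect matching, this order should yield a set $Y$ with all edges of $\partial(Y)$ pointing outward, which contradicts strong connectivity.
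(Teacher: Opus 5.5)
Your forward direction is correct and is the paper's argument: switch vertices of $C$ to make it directed, then make each odd ear directed by switching only its internal vertices, choosing the traversal direction by parity.

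\textbf{The converse has a genuine gap.} The key step is left open, and beyond that the goal you reduce it to is false. A strongly connected orientation of a matching covered graph need not contain any directed conformal even cycle.

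Take the triangular prism with triangles $a_1a_2a_3$, $b_1b_2b_3$ and rungs $a_ib_i$. It is matching covered: every edge lies in $\{a_1b_1,a_2b_2,a_3b_3\}$ or in one of the perfect matchings $\{a_ia_j,b_ib_j,a_kb_k\}$ with $\{i,j,k\}=\{1,2,3\}$. Orient both triangles with the same rotation, $a_1\to a_2\to a_3\to a_1$ and $b_1\to b_2\to b_3\to b_1$, and put $a_1\to b_1$, $b_2\to a_2$, $a_3\to b_3$. This orientation is strongly connected.

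Every even cycle (the three $4$-cycles and the three Hamiltonian $6$-cycles) consists of:
\begin{itemize}
\item two rungs $a_ib_i$ and $a_jb_j$;
\item a path $P$ from $a_i$ to $a_j$ in the first triangle;
\item the mirror copy of $P$ in the second triangle, which the cycle traverses from $b_j$ back to $b_i$.
\end{itemize}
Because both triangles carry the same rotation, $P$ is directed from $a_i$ to $a_j$ exactly when its copy is directed from $b_i$ to $b_j$, which is against the traversal. So no even cycle is directed.

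Consequently your digraph $H$ is acyclic for every perfect matching $M$, even though $D$ is strongly connected. Indeed, a directed cycle of $H$ uses distinct matching edges, hence distinct vertices, and so is literally a directed $M$-alternating cycle. This defeats both of your approaches:
\begin{itemize}
\item The first approach aims at producing a directed alternating cycle, which does not exist here.
\item The second uses the Pfaffian hypothesis only through acyclicity of the digraphs $H$. It infers, together with matching coveredness, a cut with all edges pointing outward. This example satisfies those hypotheses and has no such cut.
\end{itemize}
The orientation is, as it must be, non-Pfaffian: the conformal $4$-cycle $a_3a_1b_1b_3$ is evenly oriented, just not directed.

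\textbf{The missing idea} is that non-Pfaffianity of a strongly connected orientation can be certified without exhibiting any particular cycle. The paper chooses, for each edge $f$, a directed cycle $C_f$ of $D^X$ through $f$, and lets $w_e$ be the number of these cycles containing $e$. Then $w_e\ge 1$ for every $e$, and $w$ is a circulation. Hence the real weighted skew-adjacency matrix $A=A_{D^X}(w)$ satisfies $A\mathbf{1}=0$ and is singular.

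Now suppose $D$ were Pfaffian; then so is $D^X$ by cut reversal. In that case $\Pf(A)=\pm\sum_{M\in\cM(G)}\prod_{e\in M}w_e$, which is nonzero because all weights are positive and $G$ has a perfect matching. Cayley's identity would then give $\det A=\Pf(A)^2\ne0$, a contradiction.
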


The proof of Theorem~\ref{thm:structural} proceeds as follows. If \(D\) is non-Pfaffian, then \(G\) has an evenly oriented conformal cycle in \(D\), which can be made directed by reversing the edges of a suitable cut. We then use the ear-decomposition theorem to inductively extend this directed cycle to a strongly connected orientation of the entire graph. At each step, the new ears can be made directed without changing the orientation of the previously constructed subgraph, thereby preserving strong connectivity. Conversely, if \(D^X\) is strongly connected, then every edge belongs to a directed cycle. This allows us to assign positive edge weights so that the total incoming and outgoing weights are equal at every vertex, making the weighted skew-adjacency matrix singular. If \(D\) were Pfaffian, then \(D^X\) would also be Pfaffian, since reversing the edges of a cut preserves Pfaffianity. Thus all perfect matchings would contribute with the same sign to the Pfaffian of the weighted skew-adjacency matrix. Since the graph has a perfect matching and all edge weights are positive, this Pfaffian would be nonzero. By Cayley's identity, the matrix would therefore be nonsingular, which gives a contradiction.

Interestingly, Klostermeyer~\cite{Klostermeyer1999} studied the problem of making a directed graph strongly connected by reversing the edges of a cut and showed that this problem is NP-complete for general digraphs. In contrast, by Theorem~\ref{thm:structural} and our randomized test for Pfaffian orientations, this problem admits a randomized polynomial time algorithm on matching covered graphs. For bipartite and planar matching covered graphs, it can be solved in deterministic polynomial time.

The rest of the paper is organized as follows. In Section~\ref{sec:wrong-sign}, we study the signs of perfect matchings and give an algebraic characterization of Pfaffian orientations. In Section~\ref{sec:randomized}, we use this characterization to obtain a randomized algorithm for Pfaffian recognition, give a faster implementation, and show how to find two perfect matchings of opposite signs. In Section~\ref{sec:structural}, we prove the characterization in terms of strong connectivity and discuss its algorithmic consequences.

\section{An algebraic characterization of Pfaffian orientations}
\label{sec:wrong-sign}

In this section, we construct a polynomial that is identically zero exactly when the prescribed orientation is Pfaffian.

\subsection{Signs of perfect matchings}

We use \(\mathbb F_2=\{0,1\}\) to record parity, with \(0\) representing even parity and \(1\) representing odd parity. For an edge \(e=ij\), with \(i<j\), say that \(e\) is \emph{backward} in
\(D\) if it is oriented from \(j\) to \(i\). For a perfect matching \(M\), let
\(\rev_D(M)\in\mathbb F_2\) be the parity of the number of backward edges in \(M\).

Two vertex disjoint edges \(\{i,j\}\) and \(\{k,\ell\}\), written with \(i<j\) and \(k<\ell\),
\emph{cross} if their endpoints alternate, that is, if \(i<k<j<\ell\) or \(k<i<\ell<j\). For a
perfect matching \(M\), let \(\crs(M)\in\mathbb F_2\) be the parity of the number of crossing pairs in \(M\). The following sign formula also appears in Norine~\cite{Norine2008}.

\begin{lemma}\label{lem:sign}
For every perfect matching \(M\),
\[
    \operatorname{sign}_D(M) = (-1)^{\crs(M)+\rev_D(M)}.
\]
\end{lemma}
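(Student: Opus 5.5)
We reduce the claim to counting inversions of a convenient permutation. Write \(M=\{e_1,\dots,e_n\}\) and first choose, for each edge, the representation \(e_k=\{i_k,j_k\}\) with \(i_k<j_k\), ignoring orientation. Let \(\sigma\) be the permutation sending \(2k-1\mapsto i_k\) and \(2k\mapsto j_k\). The permutation \(\pi_D(M)\) is obtained from \(\sigma\) by swapping the two entries in positions \(2k-1,2k\) exactly for the backward edges \(e_k\). Each such swap is composition with a transposition, so
\[
\operatorname{sign}_D(M)=(-1)^{\rev_D(M)}\operatorname{sign}(\sigma).
\]
It therefore suffices to show \(\operatorname{sign}(\sigma)=(-1)^{\crs(M)}\).

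Since \(\operatorname{sign}_D(M)\) does not depend on the enumeration order of the edges, neither does \(\operatorname{sign}(\sigma)\). Indeed, permuting edge blocks permutes pairs of positions, which is an even permutation. We may therefore order the edges so that \(i_1<i_2<\dots<i_n\). We then count the inversions of the word \(i_1j_1i_2j_2\cdots i_nj_n\). Within a block there is no inversion, since \(i_k<j_k\). For two blocks \(k<\ell\) we have \(i_k<i_\ell\), so the only pairs that can be inverted are \((j_k,i_\ell)\) and \((j_k,j_\ell)\). We split into three cases:
\begin{itemize}
\item If \(j_k<i_\ell\) (disjoint intervals), there are \(0\) inversions.
\item If \(i_\ell<j_\ell<j_k\) (nested intervals), there are \(2\) inversions.
\item If \(i_\ell<j_k<j_\ell\) (crossing edges), there is exactly \(1\) inversion.
\end{itemize}
Hence the number of inversions is congruent mod \(2\) to the number of crossing pairs, which gives \(\operatorname{sign}(\sigma)=(-1)^{\crs(M)}\). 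Combined with the first step, this proves the lemma.

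The main point needing care is the case analysis on pairs of blocks: it must be checked that it is exhaustive once the edges are ordered by left endpoint, and that the ``independent of order'' remark legitimately allows that reordering. Both are routine, so no step should be a real obstacle.
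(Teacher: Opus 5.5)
Your proposal is correct and follows essentially the same argument as the paper: order the edges by left endpoint, count $0$, $1$, or $2$ inversions per pair of blocks according to whether the edges are disjoint, crossing, or nested, and pick up a factor $-1$ for each backward edge by swapping its endpoints. Your explicit justification that reordering edge blocks is an even permutation is a small detail the paper takes for granted via its order-independence remark.
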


\begin{proof}
For each edge of \(M\), write its endpoints as \((a_i,b_i)\) with \(a_i<b_i\), and order the edges so that \(a_1<a_2<\cdots<a_n\). Consider the permutation
\[
    \pi= \begin{pmatrix}
        1&2&\cdots&2n-1&2n\\
        a_1&b_1&\cdots&a_n&b_n
    \end{pmatrix}.
\]
Fix \(i<j\). Since \(a_i<a_j\), the only possible inversions involving the endpoints of the two edges \(\{a_i,b_i\}\) and \(\{a_j,b_j\}\) come from \(b_i\). If \(b_i<a_j\), there are no such inversions. If \(a_j<b_i<b_j\), there is exactly one inversion, while if \(b_j<b_i\), there are two. Thus the number of inversions contributed by this pair of edges is odd exactly when
\[
    a_i<a_j<b_i<b_j,
\]
which is exactly when the two edges cross. Therefore the parity of the total number of inversions of
\(\pi\) is \(\crs(M)\), and hence
\[
    \operatorname{sign}(\pi)=(-1)^{\crs(M)}.
\]
To obtain the oriented order defining \(\operatorname{sign}_D(M)\), we swap the two endpoints of each backward edge. Each such swap changes the sign by a factor of \(-1\). Therefore
\[
    \operatorname{sign}_D(M) = (-1)^{\crs(M)+\rev_D(M)}. \qedhere
\]
\end{proof}

\subsection{The wrong sign polynomial}

Assume that \(G\) has a perfect matching, and fix one such matching \(F\). Put
\(s=\crs(F)+\rev_D(F)\in\mathbb F_2\). By Lemma~\ref{lem:sign}, a perfect matching \(M\) has sign
opposite to \(F\) exactly when
\[
    s+\crs(M)+\rev_D(M)=1 \qquad\text{in }\mathbb F_2.
\]
Let $\x=(x_e)_{e\in E(G)}$, where the $x_e$ are variables over a finite field of characteristic two. Define
\[
    P_D(\x) = \sum_{\substack{M\in\cM(G)\\ \operatorname{sign}_D(M)\ne\operatorname{sign}_D(F)}}
        \prod_{e\in M}x_e^2.
\]
The polynomial depends on the choice of \(F\). Since we fix \(F\) throughout, we write \(P_D\) without mentioning \(F\) explicitly.

\begin{proposition}\label{prop:poly-characterization}
The orientation \(D\) is Pfaffian if and only if \(P_D(\x)\equiv0.\)
\end{proposition}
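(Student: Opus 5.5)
The plan is to show that the monomials of \(P_D\) are distinct and have coefficient \(1\), so the polynomial vanishes identically exactly when its index set is empty.

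For the direction from Pfaffian to zero: if \(D\) is Pfaffian, every perfect matching has the same sign as \(F\). The index set of the sum defining \(P_D\) is then empty, and \(P_D\) is the zero polynomial.

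For the converse, suppose \(D\) is not Pfaffian. Then some perfect matching \(M_0\) has \(\operatorname{sign}_D(M_0)\neq\operatorname{sign}_D(F)\), so the sum defining \(P_D\) is nonempty. I will show that distinct perfect matchings \(M\neq M'\) give distinct monomials \(\prod_{e\in M}x_e^2\) and \(\prod_{e\in M'}x_e^2\). Each monomial is squarefree in the squared variables, and its support \(\{e : x_e \text{ appears}\}\) is exactly \(M\). So the map from perfect matchings to monomials is injective.

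It follows that \(P_D\) is a sum of pairwise distinct monomials, each with coefficient \(1\). In particular the monomial coming from \(M_0\) has coefficient \(1\neq 0\), even in characteristic two, because no cancellation can occur. Hence \(P_D\not\equiv 0\) as a formal polynomial over the field, which proves the equivalence. Lemma~\ref{lem:sign} is not needed here; it only gives an alternative description of the index set via \(s+\crs(M)+\rev_D(M)=1\).

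There is no real obstacle. The one point to handle carefully is characteristic two: coefficients could a priori cancel in pairs, and distinctness of the monomials is exactly what rules this out. It is also worth noting that "\(\equiv 0\)" means identically zero as a formal polynomial, not as a function. Over a large enough extension field the two notions agree, and Schwartz–Zippel will later use the formal version anyway.
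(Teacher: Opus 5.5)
Your proposal is correct and follows essentially the same argument as the paper: distinct perfect matchings give distinct monomials with coefficient $1$, so no cancellation can occur even in characteristic two. Hence $P_D\equiv 0$ exactly when no perfect matching has sign opposite to that of $F$, which is exactly when $D$ is Pfaffian. You simply spell out the injectivity of the map from perfect matchings to monomials, which the paper states in one line.
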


\begin{proof}
The monomials of \(P_D\) are in one to one correspondence with the perfect matchings whose sign differs from that of \(F\). Since \(F\) is itself a perfect matching, \(P_D\) is identically zero exactly when every perfect matching has the same sign as \(F\), which is exactly when \(D\) is Pfaffian.
\end{proof}

To obtain an efficient expression for \(P_D\), we use the weighted skew-adjacency matrix \(A_D\) over a field of characteristic two. For an edge \(e=ij\), let \(A_D-e\) denote the principal submatrix obtained by deleting rows and columns \(i,j\). For two vertex disjoint edges \(e=ij\) and \(f=k\ell\), let \(A_D-e-f\) denote the principal submatrix obtained by deleting rows and columns \(i,j,k,\ell\).

Let \(R_D\subseteq E(G)\) be the set of backward edges of \(D\), and let \(\mathcal C(G)\) be the set of unordered pairs \(\{e,f\}\) of vertex disjoint edges which cross in the fixed vertex ordering. The next identity expresses \(P_D\) in terms of principal minors of \(A_D\).

\begin{lemma}\label{lem:wrong-sign-poly}
Over every field of characteristic two,
\begin{equation}\label{eq:wrong-sign-poly}
    P_D(\x) = s\det(A_D) + \sum_{e\in R_D} x_e^2\det(A_D-e)
        + \sum_{\{e,f\}\in\mathcal C(G)} x_e^2x_f^2\det(A_D-e-f).
\end{equation}
\end{lemma}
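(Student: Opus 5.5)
Over characteristic two, Cayley's identity gives \(\det(A_D)=\Pf(A_D)^2\). Since signs disappear, \(\Pf(A_D)=\sum_{M\in\cM(G)}\prod_{e\in M}x_e\). The cross terms of the square carry a factor \(2\), so they vanish, and
\[
    \det(A_D)=\sum_{M\in\cM(G)}\prod_{e\in M}x_e^2 .
\]
The same applies to every principal submatrix. The matrix \(A_D-e\) is the weighted skew-adjacency matrix of \(G-\{i,j\}\), and \(A_D-e-f\) is that of \(G-\{i,j,k,\ell\}\), with the induced orientation. Hence \(\det(A_D-e)\) is the sum of \(\prod_{g\in N}x_g^2\) over the perfect matchings \(N\) of \(G-V(e)\), and similarly for \(A_D-e-f\). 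In characteristic two, a skew-symmetric matrix is symmetric with zero diagonal (the diagonal is zero because there are no loops), so Cayley's identity still applies. I would take it as valid over any commutative ring as a polynomial identity, and then reduce mod \(2\).

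Next I multiply by \(x_e^2\) and \(x_e^2x_f^2\). The term \(x_e^2\det(A_D-e)\) becomes \(\sum_{M\ni e}\prod_{g\in M}x_g^2\), since \(N\mapsto N\cup\{e\}\) is a bijection onto the perfect matchings containing \(e\). Likewise, \(x_e^2x_f^2\det(A_D-e-f)=\sum_{M\supseteq\{e,f\}}\prod_{g\in M}x_g^2\). Summing over \(e\in R_D\) and exchanging the order of summation gives
\[
    \sum_{e\in R_D} x_e^2\det(A_D-e)=\sum_{M}|M\cap R_D|\prod_{g\in M}x_g^2 .
\]
Over characteristic two, the integer coefficient \(|M\cap R_D|\) reduces to \(\rev_D(M)\). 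In the same way, the sum over crossing pairs gives \(\sum_M c(M)\prod_{g\in M}x_g^2\), where \(c(M)\) is the number of crossing pairs inside \(M\); mod \(2\) this is \(\crs(M)\). The first term is \(s\det(A_D)=\sum_M s\prod_g x_g^2\).

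Adding the three terms, the right-hand side of \eqref{eq:wrong-sign-poly} equals
\[
    \sum_{M\in\cM(G)}\bigl(s+\rev_D(M)+\crs(M)\bigr)\prod_{g\in M}x_g^2 ,
\]
with the coefficient computed in \(\mathbb F_2\). Distinct perfect matchings give distinct monomials, so no terms combine. By Lemma~\ref{lem:sign}, the coefficient equals \(1\) exactly when \(\operatorname{sign}_D(M)\ne\operatorname{sign}_D(F)\), and it equals \(0\) otherwise. This is precisely \(P_D\).

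The only delicate point is the first step: Cayley's identity in characteristic two, and the claim that the squared Pfaffian has no surviving cross terms. I would handle it by proving \(\det=\Pf^2\) as an identity in \(\mathbb Z[\x]\) for the generic skew matrix and then reducing mod \(2\). After that reduction, \((\sum_M m_M)^2=\sum_M m_M^2\) is immediate by the Frobenius map. The rest is bookkeeping.
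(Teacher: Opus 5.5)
Your proposal is correct and follows the paper's proof essentially step for step. Both use Frobenius and Cayley's identity in characteristic two to get \(\det = \sum_M \prod x_e^2\) for \(A_D\) and its principal submatrices, extend matchings of \(G-V(e)\) and \(G-V(e)-V(f)\) to matchings of \(G\), and read off the coefficient \(s+\rev_D(M)+\crs(M)\) via Lemma~\ref{lem:sign}. Your added justification of Cayley's identity, proving it over \(\mathbb Z\) and then reducing mod \(2\), is a small extra piece of rigor that the paper takes for granted.
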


\begin{proof}
In characteristic two, \(-1=1 \), so all terms in the Pfaffian expansion have positive signs. Thus
\[
    \Pf(A_D) = \sum_{M\in\mathcal M(G)} \prod_{e\in M}x_e.
\]
In characteristic two, squaring both sides gives
\[
    (\Pf(A_D))^2 = \sum_{M\in\mathcal M(G)} \prod_{e\in M}x_e^2.
\]
By Cayley's identity, \((\Pf A_D)^2=\det A_D\), and hence
\[
    \det(A_D) = \sum_{M\in\mathcal M(G)} \prod_{e\in M}x_e^2.
\]
The same argument applies to principal submatrices. Let \(e=ij\), and let \(G-\{i,j\}\) denote the graph obtained by deleting the vertices \(i\) and \(j\). Since \(A_D-e\) is the weighted skew-adjacency matrix of \(G-\{i,j\}\) with the induced orientation,
\[
    \det(A_D-e) = \sum_{M\in\mathcal M(G-\{i,j\})} \prod_{a\in M}x_a^2.
\]
Every perfect matching of \(G-\{i,j\}\) extends uniquely to a perfect matching of \(G\) containing
\(e\), so
\[
    x_e^2\det(A_D-e) = \sum_{\substack{M\in\mathcal M(G)\\ e\in M}} \prod_{a\in M}x_a^2.
\]
Likewise, for two vertex disjoint edges \(e=ij\) and \(f=k\ell\),
\[
    x_e^2x_f^2\det(A_D-e-f) = \sum_{\substack{M\in\mathcal M(G)\\ e,f\in M}} \prod_{a\in M}x_a^2.
\]
Fix a perfect matching \(M\) and consider the coefficient of \(\prod_{e\in M}x_e^2\) on the right hand side of \eqref{eq:wrong-sign-poly}. The term \(s\det A_D\) contributes \(s\). The second sum contributes once for each backward edge in \(M\), and the third once for each crossing pair in
\(M\). Its coefficient is therefore
\[
    s+\rev_D(M)+\crs(M).
\]
By Lemma~\ref{lem:sign}, this equals \(1\) exactly when
\(\operatorname{sign}_D(M)\ne\operatorname{sign}_D(F)\), and equals \(0\) otherwise. This is
precisely the coefficient of the same monomial in \(P_D(\x)\).
\end{proof}

\section{Randomized Pfaffian recognition}
\label{sec:randomized}

In this section, we use the polynomial \(P_D\) to obtain a randomized test for Pfaffian orientations and, using the reduction of Vazirani and Yannakakis, a randomized algorithm for Pfaffian graph recognition. We further show that each randomized trial can be implemented using \(O(|V(G)|^3)\) field operations. We also give an algorithm for finding a non-Pfaffian witness.

\subsection{Testing Pfaffian orientations}

We use the following standard form of the Schwartz-Zippel lemma~\cite{Schwartz1980,Zippel1979}.

\begin{lemma}\label{lem:schwartz-zippel}
Let \(P\in\mathbb F[x_1,\ldots,x_m]\) be a nonzero polynomial of total degree at most \(d\), and let
\(S\subseteq\mathbb F\) be finite. If \(\alpha_1,\ldots,\alpha_m\) are chosen independently and
uniformly from \(S\), then
\[
    \Pr[P(\alpha_1,\ldots,\alpha_m)=0] \le \frac{d}{|S|}.
\]
\end{lemma}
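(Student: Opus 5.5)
We prove the Schwartz--Zippel bound by induction on the number of variables \(m\), the standard route.

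\emph{Base case \(m=1\).} A nonzero univariate polynomial of degree at most \(d\) over a field has at most \(d\) roots, since \(\mathbb F[x]\) is a polynomial ring over a field and we can factor out \(x-\alpha\) for each root. Hence at most \(d\) of the \(|S|\) equally likely choices of \(\alpha_1\) give a zero, and the probability is at most \(d/|S|\). The case \(m=0\) is trivial: a nonzero constant never vanishes.

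\emph{Inductive step.} For \(m\ge 2\), write
\[
    P=\sum_{i=0}^{k} x_m^{i}\,Q_i(x_1,\ldots,x_{m-1}),
\]
where \(k\) is the largest power of \(x_m\) occurring, so \(Q_k\neq0\). Since \(P\) has total degree at most \(d\), \(Q_k\) has total degree at most \(d-k\). Let \(\mathcal E\) be the event \(Q_k(\alpha_1,\ldots,\alpha_{m-1})=0\). By the induction hypothesis, \(\Pr[\mathcal E]\le (d-k)/|S|\).

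Condition on any fixed values \(\alpha_1,\ldots,\alpha_{m-1}\) outside \(\mathcal E\). Then \(P(\alpha_1,\ldots,\alpha_{m-1},x_m)\) is a nonzero univariate polynomial in \(x_m\) of degree exactly \(k\). Since \(\alpha_m\) is independent of \(\alpha_1,\ldots,\alpha_{m-1}\) and uniform on \(S\), the base case gives that it vanishes with probability at most \(k/|S|\).

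Combining the two cases,
\[
    \Pr[P=0]\le\Pr[\mathcal E]+\Pr[P=0\mid\neg\mathcal E]\le \frac{d-k}{|S|}+\frac{k}{|S|}=\frac{d}{|S|}.
\]

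The main point requiring care is the conditioning step. The independence of \(\alpha_m\) from the earlier coordinates is what lets us apply the univariate bound pointwise for each fixed value of \(\alpha_1,\ldots,\alpha_{m-1}\) and then average. We must also choose the leading coefficient \(Q_k\) with respect to \(x_m\), not the top homogeneous part, so that its degree bound \(d-k\) and the remaining univariate degree \(k\) add up to exactly \(d\). Nothing specific to characteristic two is needed, so the statement applies verbatim to the field used in Section~\ref{sec:wrong-sign}.
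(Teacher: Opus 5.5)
Your proof is correct: you induct on the number of variables, split on whether the leading coefficient \(Q_k\) of \(x_m\) vanishes, and add the bounds \((d-k)/|S|\) and \(k/|S|\). Bounding \(\Pr[P=0\wedge\neg\mathcal E]\) directly by averaging, as you indicate, also avoids conditioning on a possibly null event. The paper does not prove this lemma but cites Schwartz and Zippel, and your argument is the standard proof found in those sources.
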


Applying this lemma to \(P_D\) gives the following randomized test.

\begin{theorem}\label{thm:testing}
There is a randomized polynomial time algorithm which, given a graph \(G\) and an orientation \(D\), always accepts if \(D\) is Pfaffian and rejects with probability at least \(1/2\) if \(D\) is non-Pfaffian.
\end{theorem}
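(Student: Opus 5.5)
The algorithm first checks whether \(G\) has a perfect matching, using Edmonds' algorithm~\cite{Edmonds1965}. If it has none, the Pfaffian is the empty sum, every orientation is vacuously Pfaffian, and we accept. Otherwise, we compute one perfect matching \(F\), fix the vertex ordering, and evaluate \(s=\crs(F)+\rev_D(F)\in\mathbb F_2\) directly in \(O(n^2)\) time by counting crossing pairs and backward edges of \(F\). We then work over a finite field \(\mathbb F=\mathbb F_{2^k}\) of characteristic two, with \(2^k\ge 4n\) where \(|V(G)|=2n\). Arithmetic in such a field is polynomial in \(\log n\), and we construct it from an irreducible polynomial of degree \(k\) over \(\mathbb F_2\), found deterministically or by standard randomized means. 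Each variable \(x_e\) is assigned an independent uniform value \(\alpha_e\in S=\mathbb F\). The algorithm evaluates \(P_D(\alpha)\), rejects if the value is nonzero, and accepts otherwise.

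For the evaluation we use Lemma~\ref{lem:wrong-sign-poly}. The right-hand side of \eqref{eq:wrong-sign-poly} consists of at most \(1+|E(G)|+|E(G)|^2\) terms. Each term is a product of squared edge values and the determinant of a principal submatrix of \(A_D\) evaluated at \(\alpha\). Each such determinant is computed over \(\mathbb F\) by Gaussian elimination in \(O(n^3)\) field operations. The set \(R_D\) and the set \(\mathcal C(G)\) of crossing pairs are enumerated directly from the ordering. The total cost is therefore \(O(|E|^2 n^3)\) field operations, which is polynomial. This naive bound suffices for the present theorem; the \(O(|V|^3)\) implementation is a separate refinement.

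Correctness follows in both directions. If \(D\) is Pfaffian, Proposition~\ref{prop:poly-characterization} gives \(P_D\equiv 0\) as a polynomial over \(\mathbb F\). Hence every evaluation is zero and the algorithm always accepts. If \(D\) is non-Pfaffian, then \(P_D\) is a nonzero polynomial. The point to check is that it stays nonzero over characteristic two. It does, because distinct perfect matchings give distinct monomials \(\prod_{e\in M}x_e^2\), each with coefficient \(1\in\mathbb F_2\), so no cancellation can occur. Every monomial has total degree \(2n\), so Lemma~\ref{lem:schwartz-zippel} gives
\[
\Pr[P_D(\alpha)=0]\le \frac{2n}{|S|}\le \frac12 .
\]
Hence the algorithm rejects with probability at least \(1/2\).

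The main obstacle is the cancellation concern just described, and it is resolved by squaring the variables. A second concern is whether the identity in Lemma~\ref{lem:wrong-sign-poly} holds for evaluations in an extension field, and not only for formal polynomials over \(\mathbb F_2\). The lemma is stated over every field of characteristic two, so evaluating both sides at \(\alpha\) is legitimate.
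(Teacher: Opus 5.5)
Your proposal is correct and follows the paper's proof essentially step for step. You find a perfect matching \(F\), work over \(\mathbb F_{2^k}\) with \(2^k\ge 4n\), evaluate \(P_D\) at a random point via the determinant identity of Lemma~\ref{lem:wrong-sign-poly}, and apply Schwartz--Zippel to the degree-\(2n\) polynomial. Your remark that the squared monomials of distinct matchings cannot cancel in characteristic two is exactly the content of Proposition~\ref{prop:poly-characterization}, and you should use the deterministic field construction (as the paper does) rather than a randomized one, so that the running time stays worst-case polynomial.
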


\begin{proof}
If \(G\) has no vertices or no perfect matching, then \(D\) is Pfaffian, so we accept. Otherwise, find a perfect matching \(F\), write \(|V(G)|=2n\), and consider the polynomial \(P_D(\mathbf{x})\) defined above.

Choose a field \(\mathbb F_q\) of characteristic two, where \(q=2^t\) is the smallest power of \(2\) satisfying \(q\ge 4n\). Then \(4n\le q<8n\), and \(t=O(\log n)\), so field operations can be carried out in polynomial time. Such a field can be constructed as \(\mathbb F_2[z]/(h(z))\), where \(h\) is an irreducible polynomial of degree \(t\). Since there are only \(2^t=O(n)\) monic polynomials of degree \(t\), enumerating them and testing irreducibility takes deterministic polynomial time. The randomized algorithm proceeds as follows.

\begin{enumerate}
    \item For each edge \(e\in E(G)\), choose \(\alpha_e\in\mathbb F_q\) independently and uniformly at random.

    \item Evaluate \(P_D(\alpha)\) using Lemma~\ref{lem:wrong-sign-poly}.

    \item Reject if \(P_D(\alpha)\ne0\), and accept otherwise.
\end{enumerate}

If \(D\) is Pfaffian, then \(P_D(\x)\equiv0\) by Proposition~\ref{prop:poly-characterization}, so the algorithm always accepts. If \(D\) is non-Pfaffian, then \(P_D(\x)\) is a nonzero polynomial of degree \(2n\). By Lemma~\ref{lem:schwartz-zippel},
\[
    \Pr[P_D(\alpha)=0] \le \frac{2n}{q} \le \frac12.
\]
A non-Pfaffian orientation is therefore rejected with probability at least \(1/2\). By Lemma~\ref{lem:wrong-sign-poly}, evaluating \(P_D(\alpha)\) requires only polynomially many determinant computations on matrices of order at most \(2n\), so the algorithm runs in polynomial time.
\end{proof}

Repeating the test independently \(r\) times reduces the probability of incorrectly accepting a non-Pfaffian orientation to at most \(2^{-r}\), while Pfaffian orientations are always accepted.

\subsection{Pfaffian graph recognition}

We now prove Theorem~\ref{thm:main}. To obtain an algorithm for Pfaffian graph recognition from the test for a prescribed orientation, we use the following reduction of Vazirani and Yannakakis~\cite[Theorem~3.1]{VaziraniYannakakis1989}.

\begin{lemma}\label{lem:vy-reduction}
Let \(G\) be a graph with a perfect matching. There is a deterministic polynomial time algorithm that constructs an orientation \(D^\star\) of \(G\) such that \(G\) is Pfaffian if and only if \(D^\star\) is Pfaffian.
\end{lemma}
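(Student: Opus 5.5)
Since the statement is Theorem~3.1 of Vazirani and Yannakakis, I would reconstruct their argument. The idea is to fix the orientation on a spanning tree freely and then show that every remaining edge is \emph{forced*} for any Pfaffian orientation, up to reversing cuts.

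First I reduce to the matching covered case. Compute the edges lying in some perfect matching, which takes polynomially many matching computations, and delete the others. Deleted edges lie in no perfect matching, so they may be oriented arbitrarily in \(D^\star\) without affecting any sign. The remaining graph is a disjoint union of matching covered components \(G_1,\dots,G_k\). Orientations on different components are independent. A matching of \(G\) is a union of matchings of the components, and replacing the matching on one component changes only that component's contribution to the sign. Hence \(G\) is Pfaffian if and only if each \(G_j\) is, and it suffices to treat a matching covered \(G\).

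Next I use the ear decomposition theorem for matching covered graphs (Lov\'asz--Plummer). There is a sequence \(K_2=G_0\subset G_1\subset\cdots\subset G_r=G\) of conformal matching covered subgraphs, each obtained from the previous one by adding one or two odd ears, and it can be found in polynomial time. In each ear I designate one edge as \emph{forced}; all other edges are \emph{free}. An ear with \(m\) edges adds \(m-1\) new vertices, so the free edges form a spanning tree \(T\) of \(G\). I orient \(T\) arbitrarily. Then I process the ears in order. When a single ear \(P\) is added to \(G_{i-1}\), I take a perfect matching of \(G_i\) using the ear's alternate edges and close \(P\) by an alternating path in \(G_{i-1}\). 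This gives an \(M\)-alternating cycle \(C_i\) that is conformal in \(G_i\), and hence in \(G\), since \(G_i\) is conformal. I then orient the forced edge of \(P\) so that \(C_i\) is oddly oriented. All constructions are matching computations, so \(D^\star\) is built in polynomial time.

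For correctness, if \(D^\star\) is Pfaffian then \(G\) is trivially Pfaffian. Conversely, let \(D'\) be Pfaffian. Edges of \(T\) correspond to fundamental cuts, so for any target orientation of \(T\) there is an \(X\) with \((D')^X\) agreeing with \(D^\star\) on \(T\); this is solved by choosing \(X\) along the tree from a root. The orientation \((D')^X\) is still Pfaffian, and I show by induction on \(i\) that it agrees with \(D^\star\) on \(G_i\). The cycle \(C_i\) is conformal, so it is oddly oriented in \((D')^X\) by the classical characterization. All edges of \(C_i\) other than the forced edge of the new ear already agree with \(D^\star\). The parity condition therefore forces that edge to agree as well. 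At the end \((D')^X=D^\star\), so \(D^\star\) is Pfaffian.

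The main obstacle is the double-ear steps, where \(G_{i-1}\cup P_1\) alone is not matching covered. A cycle through only one ear need not be conformal, so I cannot handle the two ears separately. My first attempt is to force the two edges sequentially with two conformal cycles in \(G_i\), both passing through both ears:
\begin{itemize}
\item a cycle \(C\) through both ears, which fixes the forced edge of \(P_1\) once I temporarily treat the forced edge of \(P_2\) as free;
\item a second cycle \(C'\) through both ears, differing from \(C\) in how it traverses \(G_{i-1}\), so that the forced edge of \(P_2\) is the unique undetermined edge on \(C'\).
\end{itemize}
This requires checking, via the alternating structure of a perfect matching of \(G_i\) containing alternate edges of both ears, that such cycles exist and that the free-edge count still yields a spanning tree. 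If this fails, the fallback is to encode both forced edges in one cycle through both ears and use the fact that, in a double ear addition, one forced edge can always be chosen inside \(G_{i-1}\cup P_1\cup P_2\) consistently.
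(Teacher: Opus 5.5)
Your reduction to the matching covered case, the spanning-tree bookkeeping and the single-ear step are sound. The double-ear step, however, is a genuine gap, and neither of your repairs closes it.

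\textbf{Why your first attempt fails.} First split any double ear that is not genuine: if \(G_{i-1}+P_1\) is matching covered, add \(P_1\) and \(P_2\) one at a time. In a genuine double ear, every perfect matching of \(G_i\) uses the end edges of both ears or of neither. Suppose \(M,M'\) are perfect matchings of \(G_i\) and the component \(C\) of \(M\triangle M'\) through \(P_1\) avoided \(P_2\). Then \(M\triangle C\) would use the end edges of exactly one ear, which is impossible. Hence every conformal cycle of \(G_i\) meeting a new edge contains both forced edges \(f_1\in P_1\) and \(f_2\in P_2\). Each such cycle therefore gives an equation \(x_{f_1}+x_{f_2}=\text{(known)}\) over \(\mathbb F_2\), where \(x_f\) records the orientation of \(f\). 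Your cycles \(C\) and \(C'\) determine only \(x_{f_1}+x_{f_2}\). Once \(f_1\) is fixed from a tentative value of \(f_2\), the cycle \(C'\) merely re-imposes the same relation, or gives a relation on \(G_{i-1}\) alone. This is not a weakness of your choice of cycles. In \(K_4\), viewed as \(C_4\) plus two chords, both completions are Pfaffian: up to cut reversal they are \(D\) and its reverse, which are not cut-equivalent since \(K_4\) is not bipartite. So as far as \(G_i\) is concerned, one forced edge is a free parameter. Declaring it free, though, breaks the spanning-tree property your correctness argument rests on. With a cycle among the free edges, you can no longer reverse a cut of an arbitrary Pfaffian \(D'\) to make it agree with \(D^\star\) on all of them.

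\textbf{Why the fallback fails.} Orienting one forced edge per double ear arbitrarily is not merely unproved; for an arbitrary ear decomposition it is false. A choice made at a double ear can be constrained by conformal cycles that appear only later, because a Pfaffian orientation of the conformal subgraph \(G_i\) need not extend to one of \(G\). Here is an example.
\begin{itemize}
\item Let \(H\) be \(K_{2,4}\) with parts \(\{u,v\}\) and \(\{a,b,c,d\}\), plus the edges \(ab\) and \(cd\). Let \(G=H+ac\); this is a planar brick, so \(G\) is Pfaffian.
\item Use the decomposition \(ua\); the ear \(ubva\); the double ear \(\{ucdv,ab\}\); the double ear \(\{ud,vc\}\); the ear \(ac\).
\item Over \(\mathbb F_2\), the conformal cycles of \(H\) span \(3\) dimensions of its \(5\)-dimensional cycle space. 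So \(H\) has four Pfaffian orientations up to cut reversal, and the two arbitrary choices realize all four.
\item In \(G\), the span has dimension \(5\) inside a \(6\)-dimensional cycle space, so only two of those four classes extend.
\end{itemize}
For the other two choices, \(D^\star\) is non-Pfaffian although \(G\) is Pfaffian. What is missing is a global ingredient. One option is an ear decomposition with the minimum number of double ears, together with a proof that its one-cycle-per-ear constraints span all conformal-cycle constraints of \(G\). In the example, the true number of free parameters (one for \(G\), two for \(H\)) matches the number of bricks, not the number of double ears. The paper itself does not reprove this lemma; it takes it from Theorem~3.1 of Vazirani and Yannakakis.
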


\begin{proof}[Proof of Theorem~\ref{thm:main}]
If \(G\) has no vertices or no perfect matching, then \(G\) is trivially Pfaffian, so we accept. Otherwise, construct \(D^\star\) using Lemma~\ref{lem:vy-reduction} and apply Theorem~\ref{thm:testing}. If \(G\) admits a Pfaffian orientation, then \(D^\star\) is Pfaffian, so the algorithm always accepts. If \(G\) does not admit a Pfaffian orientation, then \(D^\star\) is non-Pfaffian and is rejected with probability at least \(1/2\).
\end{proof}

\subsection{Efficient evaluation of the wrong sign polynomial}
\label{sec:evaluation}

The preceding algorithm evaluates \(P_D\) using polynomially many determinant computations. We now show that, when \(A_D(\alpha)\) is invertible, \(P_D(\alpha)\) can be evaluated using \(O(|V(G)|^3)\) field operations. The main idea is to compute the required principal minors from a single matrix inverse.

We use Jacobi's complementary minor identity \cite{HornJohnson2013}. For an invertible matrix \(A\) indexed by \(V\) and a subset \(X\subseteq V\),
\[
    \det(A[V\setminus X]) = \det(A)\det(A^{-1}[X]),
\]
where \(A[X]\) denotes the principal submatrix of \(A\) with rows and columns indexed by \(X\).

In our application, the principal submatrices are obtained by deleting two or four rows and columns. Jacobi's identity therefore expresses their determinants in terms of principal submatrices of the inverse of order at most four.

Let \(\alpha=(\alpha_e)_{e\in E(G)}\) be a point over a field of characteristic two, and write \(A_\alpha=A_D(\alpha)\). Set \(\alpha_{ij}=0\) whenever \(ij\notin E(G)\). Suppose that \(A_\alpha\) is invertible, and write \(B=A_\alpha^{-1}=(b_{ij})_{i,j\in V(G)}\).

We first show that both \(A_\alpha\) and \(B\) are alternating. Over a field of characteristic two, an \emph{alternating matrix} is a symmetric matrix with zero diagonal. For any alternating matrix \(A=(a_{ij})_{n\times n}\) and any vector \(v\in\mathbb F^n\), we have
\[
    v^TAv = \sum_{i,j=1}^n a_{ij}v_iv_j
    = \sum_{i=1}^n a_{ii}v_i^2
    + \sum_{1\le i<j\le n}(a_{ij}+a_{ji})v_iv_j=0,
\]
since \(a_{ii}=0\) and \(a_{ij}=-a_{ji}\). By definition, \(A_\alpha\) is alternating. In characteristic two, symmetric and skew-symmetric matrices are the same, so \(A_\alpha^T=A_\alpha\). Hence
\[
    (A_\alpha^{-1})^T=(A_\alpha^T)^{-1}=A_\alpha^{-1}.
\]
Thus \(A_\alpha^{-1}\) is also symmetric. To see that \(A_\alpha^{-1}\) has zero diagonal, let \(e_i\) be the vector of length \(|V(G)|\) with \(1\) in position \(i\) and \(0\) elsewhere. Then
\[
    b_{ii}=e_i^T A_\alpha^{-1}e_i
    =(A_\alpha^{-1}e_i)^T A_\alpha(A_\alpha^{-1}e_i)=0,
\]
where the last equality follows since \(A_\alpha\) is alternating. Hence \(A_\alpha^{-1}\) has zero diagonal and is therefore alternating. For an edge \(e=ij\), Jacobi's identity gives
\begin{equation}\label{eq:single-minor}
    \begin{aligned}
        \det(A_\alpha-e)
        &=\det(A_\alpha)\det\bigl(A_\alpha^{-1}[\{i,j\}]\bigr)\\
        &=\det(A_\alpha)
        \det\left(
            \begin{smallmatrix}
                0 & b_{ij}\\
                b_{ij} & 0
            \end{smallmatrix}
        \right)
        =\det(A_\alpha)b_{ij}^2.
    \end{aligned}
\end{equation}
Similarly, for two vertex disjoint edges \(e=ij\) and \(f=k\ell\), Jacobi's identity gives
\begin{equation}\label{eq:double-minor}
    \begin{aligned}
        \det(A_\alpha-e-f)
        &=\det(A_\alpha)
        \det\left(
            \begin{smallmatrix}
                0 & b_{ij} & b_{ik} & b_{i\ell}\\
                b_{ij} & 0 & b_{jk} & b_{j\ell}\\
                b_{ik} & b_{jk} & 0 & b_{k\ell}\\
                b_{i\ell} & b_{j\ell} & b_{k\ell} & 0
            \end{smallmatrix}
        \right)\\
        &=\det(A_\alpha)
        \bigl(b_{ij}b_{k\ell}+b_{ik}b_{j\ell}
        +b_{i\ell}b_{jk}\bigr)^2.
    \end{aligned}
\end{equation}
The last equality follows since the determinant of an alternating matrix of even order is the square of its Pfaffian. Define
\begin{equation}\label{eq:Q}
    \begin{aligned}
        Q_D(\alpha)={}&s+\sum_{ij\in R_D}\alpha_{ij}b_{ij} +\sum_{\{ij,k\ell\}\in\mathcal C(G)}
        \alpha_{ij}\alpha_{k\ell}
        \bigl(b_{ij}b_{k\ell}+b_{ik}b_{j\ell}
        +b_{i\ell}b_{jk}\bigr).
    \end{aligned}
\end{equation}

\begin{lemma}\label{lem:one-inverse}
If \(A_\alpha\) is invertible, then \(P_D(\alpha)=\det(A_\alpha)Q_D(\alpha)^2\).
\end{lemma}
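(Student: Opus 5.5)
The plan is to start from the identity of Lemma~\ref{lem:wrong-sign-poly}, evaluated at the point \(\alpha\), and to replace each principal minor using the Jacobi-derived formulas \eqref{eq:single-minor} and \eqref{eq:double-minor}. Since \(A_\alpha\) is invertible, both formulas apply to every edge \(e\in R_D\) and every crossing pair \(\{e,f\}\in\mathcal C(G)\). Substituting them gives
\[
P_D(\alpha)=\det(A_\alpha)\Bigl(s+\sum_{ij\in R_D}\alpha_{ij}^2b_{ij}^2+\sum_{\{ij,k\ell\}\in\mathcal C(G)}\alpha_{ij}^2\alpha_{k\ell}^2\bigl(b_{ij}b_{k\ell}+b_{ik}b_{j\ell}+b_{i\ell}b_{jk}\bigr)^2\Bigr).
\]
Here we have factored out the common \(\det(A_\alpha)\) from all three terms. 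The first term already has the form \(s\det(A_\alpha)\).

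The second step uses characteristic two. The Frobenius map \(y\mapsto y^2\) is additive, so a sum of squares is the square of the sum: \(\sum_k y_k^2=(\sum_k y_k)^2\). Also \(s\in\mathbb F_2\) satisfies \(s=s^2\). Hence the bracketed expression equals
\[
\Bigl(s+\sum_{ij\in R_D}\alpha_{ij}b_{ij}+\sum_{\{ij,k\ell\}\in\mathcal C(G)}\alpha_{ij}\alpha_{k\ell}\bigl(b_{ij}b_{k\ell}+b_{ik}b_{j\ell}+b_{i\ell}b_{jk}\bigr)\Bigr)^2=Q_D(\alpha)^2,
\]
which is exactly \eqref{eq:Q}. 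This yields \(P_D(\alpha)=\det(A_\alpha)Q_D(\alpha)^2\).

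The only point needing care is the bookkeeping of indices, which I expect to be the main (mild) obstacle. Three things must be checked. First, the labels \(i,j,k,\ell\) in \eqref{eq:double-minor} must match those used in \eqref{eq:Q}. Second, the \(4\times4\) Pfaffian \(b_{ij}b_{k\ell}+b_{ik}b_{j\ell}+b_{i\ell}b_{jk}\) has no sign issues, since characteristic two makes all signs \(+1\); it is also invariant under the choice of labeling within each edge and of order between the two edges. Third, Lemma~\ref{lem:wrong-sign-poly} holds as a polynomial identity over any field of characteristic two, so it remains valid under specialization to \(\alpha\).
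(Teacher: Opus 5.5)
Your proposal is correct and matches the paper's proof. Like the paper, you substitute the Jacobi-derived minor formulas into Lemma~\ref{lem:wrong-sign-poly} evaluated at $\alpha$, factor out $\det(A_\alpha)$, and use additivity of squaring in characteristic two together with $s^2=s$. Your bookkeeping remarks are also correct and only make explicit what the paper leaves implicit: the symmetry of $b_{ij}b_{k\ell}+b_{ik}b_{j\ell}+b_{i\ell}b_{jk}$ under relabeling, and that the polynomial identity survives specialization.
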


\begin{proof}
Substituting \eqref{eq:single-minor} and \eqref{eq:double-minor} into \eqref{eq:wrong-sign-poly}, we obtain
\[
    \begin{aligned}
        P_D(\alpha)={}&s\det(A_\alpha)
        +\sum_{ij\in R_D}\alpha_{ij}^2\det(A_\alpha)b_{ij}^2
        +\sum_{\{ij,k\ell\}\in\mathcal C(G)}
        \alpha_{ij}^2\alpha_{k\ell}^2\det(A_\alpha)
        \bigl(b_{ij}b_{k\ell}+b_{ik}b_{j\ell}
        +b_{i\ell}b_{jk}\bigr)^2.
    \end{aligned}
\]
Taking \(\det(A_\alpha)\) common gives
\[
    \begin{aligned}
        P_D(\alpha)=\det(A_\alpha)\Biggl(
        &s+\sum_{ij\in R_D}(\alpha_{ij}b_{ij})^2
        +\sum_{\{ij,k\ell\}\in\mathcal C(G)}
        \bigl(\alpha_{ij}\alpha_{k\ell}
        (b_{ij}b_{k\ell}+b_{ik}b_{j\ell}
        +b_{i\ell}b_{jk})\bigr)^2
        \Biggr).
    \end{aligned}
\]
Since squaring distributes over sums in characteristic two and \(s^2=s\), the expression in parentheses is exactly \(Q_D(\alpha)^2\) by \eqref{eq:Q}. Therefore, \(P_D(\alpha)=\det(A_\alpha)Q_D(\alpha)^2\).
\end{proof}

\begin{lemma}\label{lem:cubic-evaluation}
If \(A_\alpha\) is invertible, then \(P_D(\alpha)\) can be evaluated using \(O(|V(G)|^3)\) field operations.
\end{lemma}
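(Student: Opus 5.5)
The plan is to use Lemma~\ref{lem:one-inverse}, which gives \(P_D(\alpha)=\det(A_\alpha)Q_D(\alpha)^2\). It is then enough to compute \(\det(A_\alpha)\), the inverse \(B=A_\alpha^{-1}\), and \(Q_D(\alpha)\) within \(O(N^3)\) field operations, where \(N=|V(G)|=2n\).

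\textbf{Routine parts.} Gaussian elimination yields \(\det(A_\alpha)\) and \(B\) in \(O(N^3)\) operations. The constant \(s=\crs(F)+\rev_D(F)\) costs \(O(n^2)\), since we only check pairs of edges of \(F\). The sum over \(R_D\) in \eqref{eq:Q} has at most \(|E(G)|=O(N^2)\) terms.

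\textbf{The obstacle.} The real difficulty is the sum over \(\mathcal C(G)\). It can have \(\Theta(N^4)\) terms, so it cannot be evaluated term by term.

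\textbf{Reindexing.} Since \(\alpha_{ij}=0\) for non-edges, we can sum over all vertex quadruples instead of edge pairs. Every crossing pair is written uniquely as \(\{ij,k\ell\}\) with \(i<k<j<\ell\). The third sum therefore becomes
\[
\Sigma_1+\Sigma_2+\Sigma_3
=\sum_{i<k<j<\ell}\alpha_{ij}\alpha_{k\ell}\bigl(b_{ij}b_{k\ell}+b_{ik}b_{j\ell}+b_{i\ell}b_{jk}\bigr),
\]
with one \(\Sigma_r\) for each of the three products. We evaluate each \(\Sigma_r\) separately.

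\textbf{First term.} For \(\Sigma_1\), put \(c_{ij}=\alpha_{ij}b_{ij}\), so that \(\Sigma_1=\sum_{i<j} c_{ij}\,S(i,j)\) with
\[
S(i,j)=\sum_{i<k<j<\ell}c_{k\ell}.
\]
Let \(T(k,j)=\sum_{\ell>j}c_{k\ell}\). These suffix sums cost \(O(N^2)\) in total. Then \(S(i,j)=\sum_{k=i+1}^{j-1}T(k,j)\) costs \(O(N)\) per pair, hence \(O(N^3)\) overall; a prefix sum in \(k\) would even reduce this to \(O(N^2)\).

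\textbf{Second term.} For \(\Sigma_2\) we factor through intermediate tables. Define
\[
W(k,j)=\sum_{\ell>j}\alpha_{k\ell}b_{j\ell},
\]
which takes \(O(N)\) per pair \((k,j)\), so \(O(N^3)\) in total. Then define
\[
U(i,j)=\sum_{i<k<j}b_{ik}W(k,j),
\]
again \(O(N)\) per pair. Finally \(\Sigma_2=\sum_{i<j}\alpha_{ij}U(i,j)\).

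\textbf{Third term.} For \(\Sigma_3\) we proceed the same way with the tables
\[
W'(k,j)=\sum_{\ell>j}\alpha_{k\ell}b_{i\ell}\quad\text{(wrong: depends on }i\text{)}.
\]
This naive choice fails, so we regroup instead around the pair \((i,\ell)\):
\[
\Sigma_3=\sum_{i<\ell}b_{i\ell}\sum_{i<k<j<\ell}\alpha_{ij}\alpha_{k\ell}b_{jk}.
\]
Set \(Y(k,\ell)=\sum_{j:\,k<j<\ell}\ldots\); this also fails, because \(\alpha_{ij}\) depends on \(i\). So we instead form
\[
Z(i,k)=\sum_{j>k}\alpha_{ij}b_{jk}\,[\,j<\ell\,],
\]
and handle the remaining constraint \(j<\ell\) by swapping the order. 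Concretely, for fixed \(i\) and \(j\), let
\[
V(i,j)=\sum_{i<k<j}b_{jk}\sum_{\ell>j}\alpha_{k\ell}b_{i\ell}.
\]
To keep this cubic, we compute \(H_j(k,i)=\sum_{\ell>j}\alpha_{k\ell}b_{i\ell}\) for all \(k,i\) by a single suffix sweep in \(\ell\). We process \(j\) in decreasing order and maintain the matrix \(H(k,i)\), which is updated by the rank-one contribution \(\alpha_{k,j+1}b_{i,j+1}\) at cost \(O(N^2)\) per step. This gives \(O(N^3)\) in total, and each \(V(i,j)\) then costs \(O(N)\). The same sweep structure also covers \(\Sigma_2\). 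Finally \(\Sigma_3=\sum_{i<j}\alpha_{ij}V(i,j)\).

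Combining these pieces gives \(Q_D(\alpha)\), and hence \(P_D(\alpha)=\det(A_\alpha)Q_D(\alpha)^2\), in \(O(N^3)\) field operations. The main care needed is in the index bookkeeping for \(\Sigma_3\): each summand must be factored through tables whose dependence on the interleaving constraint \(i<k<j<\ell\) is absorbed by a monotone sweep.
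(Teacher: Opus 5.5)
Your argument is correct and gives the $O(N^3)$ bound; it shares the paper's outline (Lemma~\ref{lem:one-inverse}, Gaussian elimination, then partial sums to avoid the quartic loop over $i<k<j<\ell$), but the bookkeeping for the crossing sum is different. The paper does not split the crossing sum into three pieces. It fixes $j$, sweeps $k<j$ upward, and maintains one vector $h_v=\sum_{i<k}\alpha_{ij}b_{iv}$ indexed by an arbitrary vertex $v$. All three products are then handled at once as $\alpha_{k\ell}(b_{k\ell}h_j+b_{j\ell}h_k+b_{jk}h_\ell)$. In other words, the paper always sums out the smallest index $i$, and the awkward term $b_{i\ell}b_{jk}$ costs nothing extra, because one just reads off the entry $h_\ell$. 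You instead treat $\Sigma_1,\Sigma_2,\Sigma_3$ separately and always sum out the largest index $\ell$ first. For $\Sigma_1$ and $\Sigma_2$ this gives two-index tables, but for $\Sigma_3$ the $\ell$-sum $\sum_{\ell>j}\alpha_{k\ell}b_{i\ell}$ genuinely depends on $i$, $k$ and $j$. That is why you need the downward sweep in $j$ with the $N\times N$ matrix $H_j(k,i)$ updated by a rank-one term at each step. Both versions cost $O(N^3)$. The paper's is more uniform and needs only $O(N)$ auxiliary storage beyond $B$. Yours makes the factorization of each product explicit and easy to check term by term.

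When you write it up, delete the abandoned attempts in the $\Sigma_3$ paragraph: the table $W'$ (which you yourself mark as wrong), the unspecified $Y(k,\ell)$, and $Z(i,k)$, which contains the unbound index $\ell$ and so is not well defined. The argument needs only $V(i,j)=\sum_{i<k<j}b_{jk}H_j(k,i)$, computed during the $H_j$ sweep, and then $\Sigma_3=\sum_{i<j}\alpha_{ij}V(i,j)$.
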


\begin{proof}
The determinant and inverse of \(A_\alpha\) can be computed by Gaussian elimination using \(O(|V(G)|^3)\) field operations. By Lemma~\ref{lem:one-inverse}, it is enough to show that \(Q_D(\alpha)\) can be computed within the same bound. The constant term and the sum over \(R_D\) take only \(O(|V(G)|^2)\) operations. The only nontrivial part is the sum over crossing pairs,
\[
    S=\sum_{\{ij,k\ell\}\in\mathcal C(G)}
    \alpha_{ij}\alpha_{k\ell}
    \bigl(b_{ij}b_{k\ell}+b_{ik}b_{j\ell}
    +b_{i\ell}b_{jk}\bigr).
\]
Every pair of crossing edges can be written uniquely as \(ij,k\ell\), where \(i<k<j<\ell\). A direct computation of \(S\) would therefore use four nested loops, one for each of \(i,k,j,\ell\), and would take \(O(|V(G)|^4)\) operations. We show how to avoid the loop over \(i\). Using the order \(i<k<j<\ell\), we can rewrite \(S\) as
\[
    S=\sum_j\sum_{k<j}\sum_{\ell>j}\sum_{i<k}
    \alpha_{ij}\alpha_{k\ell}
    \bigl(b_{ij}b_{k\ell}+b_{ik}b_{j\ell}
    +b_{i\ell}b_{jk}\bigr).
\]
For fixed \(j,k,\ell\), the factors which do not depend on \(i\) can be taken outside the innermost sum. Hence
\[
    \begin{aligned}
        S=\sum_j\sum_{k<j}\sum_{\ell>j}
        \alpha_{k\ell}\Bigg(
        &b_{k\ell}\sum_{i<k}\alpha_{ij}b_{ij}
+b_{j\ell}\sum_{i<k}\alpha_{ij}b_{ik}
        +b_{jk}\sum_{i<k}\alpha_{ij}b_{i\ell}
        \Bigg).
    \end{aligned}
\]
For fixed \(j\), the inner sums can be updated efficiently as \(k\) increases. Set \(S=0\). For each \(j\), perform the following steps.

\begin{enumerate}
    \item Set \(h_v\leftarrow 0\) for every vertex \(v\).

    \item Process \(k<j\) in increasing order. For each such \(k\):
    \begin{enumerate}
        \item For each \(\ell>j\), set
        \[
            S\leftarrow S+
            \alpha_{k\ell}
            \bigl(b_{k\ell}h_j+b_{j\ell}h_k+b_{jk}h_\ell\bigr).
        \]

        \item After all \(\ell>j\) have been processed, set
        \(
            h_v\leftarrow h_v+\alpha_{kj}b_{kv}
        \)
        for every vertex \(v\).
    \end{enumerate}
\end{enumerate}

We now show that the above procedure computes exactly \(S\). Before processing a given \(k\), the value \(h_v\) is
\[
    h_v=\sum_{i<k}\alpha_{ij}b_{iv}.
\]
Thus \(h_j\), \(h_k\), and \(h_\ell\) are exactly the three inner sums appearing in the expression for \(S\). Therefore the update of \(S\) adds the contribution of all choices \(i<k\) for the fixed \(j,k,\ell\). After processing \(k\), the update \(h_v\leftarrow h_v+\alpha_{kj}b_{kv}\) adds the new term corresponding to \(i=k\). Hence the same identity for \(h_v\) holds for the next value of \(k\). So, as \(j\), \(k\), and \(\ell\) run through all possible choices, we add exactly all terms with \(i<k<j<\ell\). These are exactly the terms in the sum defining \(S\). Hence the procedure computes \(S\).

We now count the number of field operations. For each fixed pair \(j,k\), the loop over \(\ell>j\) takes \(O(|V(G)|)\) field operations. Updating \(h_v\) for all vertices \(v\) also takes \(O(|V(G)|)\) field operations. Since there are \(O(|V(G)|^2)\) choices of \(j,k\), the total number of field operations is
\[
    O(|V(G)|^2)\cdot O(|V(G)|)
    =O(|V(G)|^3).
\]
Thus \(S\), and hence \(Q_D(\alpha)\), can be computed using \(O(|V(G)|^3)\) field operations. By Lemma~\ref{lem:one-inverse}, the same bound holds for evaluating \(P_D(\alpha)\).
\end{proof}

The preceding lemma applies only when \(A_\alpha\) is invertible. In the randomized test, we accept whenever \(A_\alpha\) is singular. Since \(G\) has a perfect matching, \(\det(A_D)\) is a nonzero polynomial, so the probability of obtaining a singular matrix can be bounded using the Schwartz--Zippel lemma. This gives the following result.

\begin{corollary}\label{cor:cubic-testing}
Let \(G\) be a graph with a perfect matching, and let \(D\) be an orientation of \(G\). There is a randomized test for whether \(D\) is Pfaffian using \(O(|V(G)|^3)\) field operations per trial. The test always accepts if \(D\) is Pfaffian and rejects with probability at least \(1/2\) otherwise.
\end{corollary}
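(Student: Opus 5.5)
The plan is to reuse the test of Theorem~\ref{thm:testing} with one change: in each trial we evaluate \(P_D(\alpha)\) through Lemma~\ref{lem:one-inverse} and Lemma~\ref{lem:cubic-evaluation}, not through the polynomially many determinants of Lemma~\ref{lem:wrong-sign-poly}. The field is enlarged slightly so that one Schwartz--Zippel bound controls both failure events, namely a singular \(A_\alpha\) and an accidental root of \(P_D\).

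\textbf{Setup.} We fix \(|V(G)|=2n\) and find a perfect matching \(F\) in polynomial time; this is a one-time preprocessing step and is not part of the per-trial cost. We then compute \(s=\crs(F)+\rev_D(F)\), the set \(R_D\), and take a field \(\mathbb F_q\) of characteristic two with \(q\ge 8n\). Such a field is constructed as in the proof of Theorem~\ref{thm:testing}.

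\textbf{One trial.} We choose \(\alpha_e\in\mathbb F_q\) uniformly and independently for each edge and form \(A_\alpha\). Gaussian elimination then computes \(\det(A_\alpha)\) using \(O(n^3)\) field operations.
\begin{itemize}
\item If \(\det(A_\alpha)=0\), the trial accepts.
\item Otherwise it computes \(B=A_\alpha^{-1}\), evaluates \(Q_D(\alpha)\) by the procedure of Lemma~\ref{lem:cubic-evaluation}, and rejects if and only if \(Q_D(\alpha)\neq0\).
\end{itemize}
By Lemma~\ref{lem:one-inverse}, when \(\det(A_\alpha)\ne0\) the condition \(P_D(\alpha)\ne0\) is equivalent to \(Q_D(\alpha)\ne0\). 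So a rejection is always backed by \(P_D(\alpha)\ne0\). The total cost of the trial is \(O(n^3)\) field operations.

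\textbf{Correctness.} If \(D\) is Pfaffian, Proposition~\ref{prop:poly-characterization} gives \(P_D\equiv0\). Hence \(Q_D(\alpha)=0\) whenever \(A_\alpha\) is invertible, and the trial always accepts.

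If \(D\) is non-Pfaffian, consider the product \(\det(A_D)\,P_D(\x)\).
\begin{itemize}
\item \(\det(A_D)=\sum_{M}\prod_{e\in M}x_e^2\) is nonzero, because \(G\) has a perfect matching; this is the identity from the proof of Lemma~\ref{lem:wrong-sign-poly}.
\item \(P_D\) is nonzero by Proposition~\ref{prop:poly-characterization}.
\end{itemize}
Since \(\mathbb F_q[\x]\) is an integral domain, the product is nonzero, and its total degree is \(4n\).

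The trial wrongly accepts only if \(\det(A_\alpha)=0\) or \(P_D(\alpha)=0\), which means only if the product vanishes at \(\alpha\). By Lemma~\ref{lem:schwartz-zippel}, this has probability at most \(4n/q\le 1/2\). Therefore the trial rejects with probability at least \(1/2\).

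The main obstacle is keeping the two failure events under a single bound. Multiplying the two polynomials handles this cleanly. Once that is done, the rest is bookkeeping from the earlier lemmas.
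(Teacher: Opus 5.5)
Your proof is correct and follows the paper's argument: you use the same test (accept on singular \(A_\alpha\), otherwise reject iff \(Q_D(\alpha)\neq 0\)), the same field size \(q\ge 8n\), and the same appeal to Lemma~\ref{lem:one-inverse} and Lemma~\ref{lem:cubic-evaluation} for correctness and the \(O(|V(G)|^3)\) cost. The only cosmetic difference is that you apply Schwartz--Zippel once to the degree-\(4n\) product \(\det(A_D)\,P_D\), whereas the paper applies it separately to the two degree-\(2n\) polynomials and takes a union bound; both give the same bound \(4n/q\le 1/2\).
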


\begin{proof}
If \(G\) has no vertices or no perfect matching, then \(D\) is trivially Pfaffian, so we accept. Otherwise, let \(|V(G)|=2n\), and let \(q=2^t\) be the smallest power of two satisfying \(q\ge 8n\). Choose \(\alpha\in\mathbb F_q^{E(G)}\) uniformly at random. We use the following test.
\begin{enumerate}
    \item If \(A_\alpha\) is singular, accept.

    \item Otherwise, compute \(Q_D(\alpha)\). Reject if \(Q_D(\alpha)\ne0\), and accept otherwise.
\end{enumerate}

If \(D\) is Pfaffian, then \(P_D(\alpha)=0\) for every \(\alpha\). If \(\det(A_\alpha)=0\), the test accepts. Otherwise, \(A_\alpha\) is invertible, and by Lemma~\ref{lem:one-inverse},
\[
    P_D(\alpha)=\det(A_\alpha)Q_D(\alpha)^2=0.
\]
Since \(\det(A_\alpha)\ne0\), we have \(Q_D(\alpha)=0\), so the test also accepts in this case. Hence the test always accepts when \(D\) is Pfaffian.

Now suppose that \(D\) is non-Pfaffian. Then \(P_D\) is a nonzero polynomial of degree \(2n\). Since \(G\) has a perfect matching, \(\det(A_D)\) is also a nonzero polynomial of degree \(2n\). The test can accept only if \(P_D(\alpha)=0\) or \(\det(A_\alpha)=0\). By the Schwartz--Zippel lemma and the union bound,
\[
    \begin{aligned}
        \Pr\bigl[P_D(\alpha)=0\text{ or }\det(A_\alpha)=0\bigr]
        &\le \Pr[P_D(\alpha)=0]+\Pr[\det(A_\alpha)=0]\\
        &\le \frac{2n}{q}+\frac{2n}{q}
        =\frac{4n}{q}\le\frac12.
    \end{aligned}
\]
Thus, with probability at least \(1/2\), \(A_\alpha\) is invertible and \(P_D(\alpha)\ne0\). By Lemma~\ref{lem:one-inverse}, this implies \(Q_D(\alpha)\ne0\), and the test detects that \(D\) is non-Pfaffian.

Testing whether \(A_\alpha\) is singular takes \(O(|V(G)|^3)\) field operations. If it is invertible, its inverse and \(Q_D(\alpha)\) can also be computed within the same bound by Lemma~\ref{lem:cubic-evaluation}. Hence each trial uses \(O(|V(G)|^3)\) field operations.
\end{proof}

\subsection{Finding a certificate of non-Pfaffianity}

The randomized test detects non-Pfaffian orientations but does not directly produce a certificate. We now show how a nonzero evaluation of \(P_D\) can be used to find two perfect matchings of opposite signs by self-reduction.

\begin{theorem}\label{thm:witness-recovery}
There is a randomized polynomial time algorithm that, given a non-Pfaffian orientation \(D\) of a graph \(G\), finds two perfect matchings of opposite signs with probability at least \(1/2\).
\end{theorem}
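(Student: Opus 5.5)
Since \(D\) is non-Pfaffian, \(P_D\) is a nonzero polynomial of degree \(2n\). We recover a perfect matching \(M\) with \(\operatorname{sign}_D(M)\ne\operatorname{sign}_D(F)\) by self-reduction, using one random point for the whole procedure. The fixed matching \(F\) (found by Edmonds' algorithm) is then the second member of the certificate.

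First choose \(\alpha\in\mathbb F_q^{E(G)}\) uniformly at random, with \(q\) a power of two satisfying \(q\ge 4n\). By Lemma~\ref{lem:schwartz-zippel}, \(P_D(\alpha)\ne0\) with probability at least \(1/2\). We evaluate \(P_D(\alpha)\) via Lemma~\ref{lem:wrong-sign-poly}; if it is zero we fail, and this is the only source of failure. Next we process the edges \(e_1,\dots,e_m\) one at a time. We maintain a current point \(\beta\), initially \(\alpha\), with \(P_D(\beta)\ne0\) as the invariant. At step \(t\) we tentatively set \(\beta_{e_t}=0\) and evaluate \(P_D\) again. If the value is still nonzero, we keep the change; otherwise we restore \(\beta_{e_t}=\alpha_{e_t}\).

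The key observation is that setting \(\beta_e=0\) evaluates the restriction of \(P_D\) to the perfect matchings avoiding \(e\). Concretely, \(P_D\) is a sum of monomials \(\prod_{a\in M}x_a^2\) over wrong-sign matchings \(M\), so it splits as \(P_D=x_e^2\,U+W\). Here \(W\) collects the wrong-sign matchings avoiding \(e\), and zeroing \(x_e\) leaves exactly \(W(\beta)\). Note that the sign condition refers to \(F\) and \(D\) in the original graph, not to a smaller graph, so formula~\eqref{eq:wrong-sign-poly} can be used verbatim at any point \(\beta\) with some coordinates zero. Also, \(P_D(\beta)\) is a sum over matchings of products of nonzero values, and zeroing coordinates can only remove matchings.

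At the end, let \(H\) be the set of edges with \(\beta_e\ne0\). Then \(P_D(\beta)\ne0\), so some wrong-sign perfect matching lies inside \(H\). We claim \(H\) is itself a perfect matching, and argue by contradiction. Suppose an edge \(e\in H\) lies in no wrong-sign matching inside \(H\). Then zeroing \(e\) would not change the value, and \(e\) would have been deleted at its step. To make this precise, note that the set of nonzero edges only shrinks over time. If \(e\) lies in no wrong-sign matching within the final \(H\), however, it may have lain in one within the larger edge set present at its step. So we must show that for the final \(H\) the sum over wrong-sign matchings in \(H\) is a single monomial, and this is the main obstacle. Cancellation is a real issue: two wrong-sign matchings \(M\ne M'\) in \(H\) give distinct monomials, but their values at \(\beta\) could cancel.

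The fix I would use is a second pass. Repeat the deletion loop over the final \(H\), now testing each edge against the current point; the pass stops once no edge can be removed. At that point every remaining edge \(e\) satisfies \(W_e(\beta)=0\), i.e.\ the full value equals \(\beta_e^2U_e(\beta)\) for every \(e\in H\). If \(H\) contained two distinct perfect matchings, a vertex would have two incident \(H\)-edges, and this is the situation where the cancellation argument needs care. If this becomes delicate, the fallback is to rerandomize the nonzero coordinates after each deletion, which reduces the check to a fresh Schwartz--Zippel bound at each of \(m\) steps. To keep the overall success probability at \(1/2\) with a union bound, the field size should then be taken to be \(q\ge 4nm\).

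Once \(H\) is a perfect matching \(M\) with \(\operatorname{sign}_D(M)\ne\operatorname{sign}_D(F)\), which we verify directly with Lemma~\ref{lem:sign}, we output \(M\) and \(F\). All evaluations use Lemma~\ref{lem:wrong-sign-poly}, so the running time is polynomial.
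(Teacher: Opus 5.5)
Your main line has a genuine gap. You locate the obstacle correctly, but you do not get past it, and the claim that a nonzero initial value $P_D(\alpha)$ with $q\ge 4n$ is ``the only source of failure'' is false. A zero value after setting $\beta_{e_t}=0$ only says that the wrong-sign matchings inside $H_t\setminus e_t$ cancel at this particular point. Once later deletions shrink the support to some $H\subseteq H_t$, that says nothing about $H\setminus e_t$.

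The second pass does not repair this. At a fixpoint, let $v$ have $H$-edges $e_1,\dots,e_d$, and let $T_{e_i}$ be the contribution of wrong-sign matchings through $e_i$. Then $T_{e_i}=T$ for every $i$, so $T=dT$, which in characteristic two only forces $d$ to be odd, not $d=1$.

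Concretely, let $G$ be a $K_4$ on $\{1,2,3,4\}$ plus edges $56,15,26$. Orient the $K_4$ Pfaffian and choose the direction of $15$ so that the three matchings containing $56$ have sign opposite to $F=\{15,26,34\}$. Then $P_D=x_{56}^2\bigl(x_{12}^2x_{34}^2+x_{13}^2x_{24}^2+x_{14}^2x_{23}^2\bigr)$. Suppose $\alpha_{56}\ne0$ and $\alpha_{12}\alpha_{34}=\alpha_{13}\alpha_{24}=\alpha_{14}\alpha_{23}\ne0$, an event of positive probability. Both of your passes delete $15$ and $26$ and nothing else, because zeroing any $K_4$ edge leaves two equal terms that cancel. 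They stop at $E(K_4)\cup\{56\}$ with $P_D(\beta)\ne0$, which is not a perfect matching. You give no bound on the probability of such events.

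What is missing is a progress step that holds exactly at the evaluation point. The paper branches on a vertex instead of an edge. If $v$ is uncovered and $\beta^{(e)}$ zeroes all edges at $v$ other than $e$, then $P_D(\beta)=\sum_{e\ni v}P_D(\beta^{(e)})$ as field elements, since every perfect matching uses exactly one edge at $v$. So some $P_D(\beta^{(e)})\ne0$. One commits to that $e=uv$, zeroes the other edges at $u$ without changing the value, and repeats. Cancellation cannot derail this, the final support is a perfect matching by construction, and one random point with $q\ge4n$ suffices.

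Your own single-pass procedure can also be rescued, but by a different argument from the ones you sketch. Compare it with the ideal run in which each test asks whether the restriction of $P_D$ to $H_t\setminus e_t$ is identically zero. That run is deterministic, so it tests at most $m$ polynomials fixed in advance, each of degree at most $2n$. If none of the nonzero ones vanishes at $\alpha$, the actual run coincides with the ideal one; by the union bound this has probability at least $1/2$ once $q\ge 4nm$. In the ideal run, identical vanishing is inherited by subsets. Hence any edge of the final $H$ lying outside a wrong-sign matching $M\subseteq H$ would have been deleted, giving $H=M$.

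Your rerandomization fallback works for the same reason, but it needs this monotonicity argument, not just fresh Schwartz--Zippel bounds, to conclude that $H$ is a perfect matching. Either edge-deletion variant costs a larger field and an error budget growing with $m$, compared with the paper's route.
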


\begin{proof}
We use the randomized test of Theorem~\ref{thm:testing} to find a perfect matching \(F\) and an assignment \(\alpha\) such that \(P_D(\alpha)\ne0\). If the test accepts, return no certificate. Otherwise, we recover a perfect matching of sign opposite to \(F\) as follows.

\begin{enumerate}
    \item Set \(\beta=\alpha\) and \(M=\varnothing\).

    \item Choose a vertex \(v\) not covered by \(M\). For each edge \(e\) incident with \(v\), let \(\beta^{(e)}\) agree with \(\beta\), except that the weights of all other edges incident with \(v\) are set to zero.

    \item Find an edge \(e=uv\) such that \(P_D(\beta^{(e)})\ne0\). Replace \(\beta\) by \(\beta^{(e)}\), set to zero the weights of all other edges incident with \(u\), and add \(e\) to \(M\).

    \item Repeat Steps 2 and 3 until every vertex is covered by \(M\). Return \(F\) and \(M\).
\end{enumerate}

We show that the algorithm maintains \(P_D(\beta)\ne0\) throughout the construction. Since every perfect matching contains exactly one edge incident with \(v\), we have
\[
P_D(\beta)=\sum_{e\ni v}P_D(\beta^{(e)}).
\]
As \(P_D(\beta)\ne0\), there exists an edge \(e=uv\) such that \(P_D(\beta^{(e)})\ne0\). Since all edges incident with previously covered vertices, except the selected edges, have weight zero, the vertex \(u\) must also be uncovered. Moreover, every perfect matching contributing a nonzero term to \(P_D(\beta^{(e)})\) contains \(e\). Thus setting the weights of all other edges incident with \(u\) to zero does not change the evaluation. Hence the selected edges remain disjoint and \(P_D(\beta)\ne0\) is preserved.

Since \(P_D(\beta)\ne0\), there exists a perfect matching containing all selected edges. When every vertex has been covered, \(M\) is the only perfect matching that can contribute a nonzero term to \(P_D(\beta)\). Therefore, \(M\) has sign opposite to \(F\), as required.

The algorithm requires at most \(O(|E(G)|)\) evaluations of \(P_D\), each of which can be performed in polynomial time by Lemma~\ref{lem:wrong-sign-poly}. By Theorem~\ref{thm:testing}, a nonzero initial evaluation is obtained with probability at least \(1/2\), completing the proof.
\end{proof}

\section{A structural characterization of Pfaffian orientations}
\label{sec:structural}

In this section, we prove Theorem~\ref{thm:structural}, which characterizes non-Pfaffian orientations of matching covered graphs in terms of strong connectivity, and discuss its algorithmic consequences. We first recall the classical characterization of Pfaffian orientations in terms of conformal even cycles~\cite[Theorem~8.3.2]{LovaszPlummer1986}.

\begin{lemma}\label{lem:conformal-cycle}
Let \(G\) be a graph with a perfect matching and let \(D\) be an orientation of \(G\). Then \(D\) is Pfaffian if and only if every conformal even cycle of \(G\) is oddly oriented in \(D\).
\end{lemma}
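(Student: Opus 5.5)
We prove Lemma~\ref{lem:conformal-cycle} by comparing the signs of two perfect matchings directly through the permutations \(\pi_D\). The key fact is this: if \(M_1\) and \(M_2\) are perfect matchings, then
\[
\operatorname{sign}_D(M_1)\operatorname{sign}_D(M_2)=\prod_{C}(-1)^{\,\text{(number of edges of \(C\) oriented along a fixed traversal)}+1},
\]
where \(C\) ranges over the cycles of \(M_1\triangle M_2\). Equivalently, the two matchings have the same sign if and only if an even number of the cycles in \(M_1\triangle M_2\) are evenly oriented. In particular, if \(M_1\triangle M_2\) is a single cycle, then the matchings have equal sign exactly when that cycle is oddly oriented.

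To prove the key fact, we first order the edges of \(M_1\) and \(M_2\) compatibly. Edges common to both matchings are listed identically in both, and since the sign is independent of the ordering, we may group the edges cycle by cycle. This reduces the claim to a single cycle \(C=v_1v_2\cdots v_{2k}\). Suppose \(M_1\) contains the edges \(v_1v_2,v_3v_4,\ldots\) and \(M_2\) contains \(v_2v_3,\ldots,v_{2k}v_1\). We first write both matchings in the "forward traversal" orientation. The permutation taking \((v_1,v_2,\ldots,v_{2k})\) to \((v_2,v_3,\ldots,v_{2k},v_1)\) is a \(2k\)-cycle and therefore odd. Each edge whose orientation in \(D\) disagrees with the traversal direction contributes one transposition. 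Since \(C\) has \(2k\) edges, the number of edges agreeing with the traversal direction has the same parity as the number disagreeing. Hence the relative sign is \(-1\) times \((-1)^{\#\text{disagreeing}}\), which equals \(+1\) exactly when the cycle is oddly oriented. This is the only genuine computation in the proof, and it is also the step most prone to sign errors, so it must be checked carefully.

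With the key fact in hand, the two directions follow.

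\emph{If every conformal even cycle is oddly oriented, then \(D\) is Pfaffian.} Take any two perfect matchings \(M_1,M_2\). Every cycle \(C\) of \(M_1\triangle M_2\) is conformal, because \(M_1\) restricted to \(V(G)\setminus V(C)\) is a perfect matching of \(G-V(C)\). Each such cycle is therefore oddly oriented, so by the key fact all perfect matchings have the same sign.

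\emph{If \(D\) is Pfaffian, then every conformal even cycle is oddly oriented.} Let \(C\) be a conformal even cycle, and let \(N\) be a perfect matching of \(G-V(C)\). The cycle \(C\) decomposes into two perfect matchings \(C_1,C_2\) of \(V(C)\). Then \(N\cup C_1\) and \(N\cup C_2\) are perfect matchings of \(G\) whose symmetric difference is exactly \(C\). Since \(D\) is Pfaffian, these two matchings have equal sign, and the key fact then forces \(C\) to be oddly oriented.
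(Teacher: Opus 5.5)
Your proof is correct: the relative-sign computation on a single alternating cycle, and both directions derived from it, check out. The paper gives no proof of its own, only a citation to Lov\'asz and Plummer. Your argument, which shows $\operatorname{sign}_D(M_1)\operatorname{sign}_D(M_2)=(-1)^k$ with $k$ the number of evenly oriented cycles of $M_1\triangle M_2$ and then uses arbitrary pairs for one direction and $N\cup C_1$, $N\cup C_2$ for the other, is the standard proof of that cited theorem.
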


We also use the ear-decomposition theorem for matching covered graphs. An \emph{odd ear} of a subgraph \(H\) is a path of odd length whose endpoints belong to \(V(H)\), whose internal vertices lie outside \(V(H)\), and whose edges do not belong to \(H\). An \emph{ear decomposition} of a matching covered graph \(G\) is a sequence of matching covered subgraphs starting with \(K_2\) and ending with \(G\), where each successive graph is obtained from the previous one by adding either one odd ear or two vertex disjoint odd ears. Lovász and Plummer showed that an ear decomposition can be extended from any conformal matching covered subgraph~\cite{LovaszPlummer1986}. We use the following form of their theorem~\cite[Theorem~11.11]{LucchesiMurty2024}.

\begin{lemma}\label{lem:ear-decomposition}
Let \(G\) be a matching covered graph and let \(H\) be a conformal matching covered subgraph of \(G\). Then there exists a sequence
\[
H=H_0\subseteq H_1\subseteq\cdots\subseteq H_t=G
\]
of matching covered subgraphs such that each \(H_{i+1}\) is obtained from \(H_i\) by adding either one odd ear or two vertex disjoint odd ears.
\end{lemma}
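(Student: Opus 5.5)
The plan is to induct on \(|E(G)|-|E(H)|\), constructing a single step \(H\subseteq H'\) and then applying induction to \(H'\). So it suffices to prove the following: if \(H\ne G\), there are one or two vertex disjoint odd ears whose addition to \(H\) gives a conformal matching covered subgraph \(H'\).

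\textbf{Setting up an alternating cycle.} Since \(H\) is conformal, fix a perfect matching \(M\) of \(G\) whose restriction to \(V(H)\) is a perfect matching of \(H\). Then every \(M\)-edge incident with \(V(H)\) lies in \(E(H)\). Since \(G\) is connected and \(H\ne G\), there is an edge \(e\notin E(H)\) with at least one end in \(V(H)\). As \(G\) is matching covered, some perfect matching \(N\) contains \(e\). Let \(Q\) be the \(M\)-\(N\)-alternating cycle of \(M\triangle N\) through \(e\). Decompose \(Q\) into maximal segments lying in \(H\) and maximal segments using edges outside \(E(H)\). Every outside segment begins and ends at \(V(H)\) with \(N\)-edges, because \(M\)-edges at \(V(H)\) are in \(H\). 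Hence each outside segment is an \(M\)-alternating path of odd length whose internal vertices avoid \(V(H)\) (an internal vertex in \(V(H)\) would force its \(M\)-edge into \(H\)). So each outside segment is an odd ear of \(H\).

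\textbf{One ear.} Let \(P\) be an odd ear of \(H\) with ends \(u,v\), chosen from \(Q\). The first check is that \(H+P\) is matching covered if and only if \(H-u-v\) has a perfect matching.
\begin{itemize}
\item For the edges of \(H\) and the \(M\)-edges of \(P\): extend perfect matchings of \(H\) by the internal \(M\)-edges of \(P\).
\item For the remaining edges of \(P\): use a perfect matching of \(H-u-v\) together with the \(N\)-edges of \(P\).
\item Connectivity of \(H+P\) is clear.
\end{itemize}
Conformality of \(H+P\) also follows, since \(M\) restricted to \(V(G)\setminus V(H+P)\) is still a perfect matching there. In particular, if \(Q\) has exactly one outside segment \(P\), then \((M\cap E(H+P))\triangle Q\) is a perfect matching of \(H+P\) containing the end edges of \(P\). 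Its restriction gives the required perfect matching of \(H-u-v\), so a single ear suffices.

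\textbf{Several outside segments.} This is the step I expect to be the main obstacle. Here \(Q\) may contain several ears \(P_1,\dots,P_r\) with \(r\ge2\), and for none of them need \(H-u_i-v_i\) have a perfect matching; this happens, for instance, when \(H\) is bipartite and both ends of an ear lie on the same side. My first attempt is to choose \(e\) and \(N\) so that \(Q\) has the minimum number of outside segments, and then show \(r\le2\).

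The idea is to use the matching covered property of \(H\) to reroute. Suppose some outside segment \(P_i\) has ends \(u_i,v_i\) for which \(H-u_i-v_i\) has a perfect matching. Then add \(P_i\) alone. Otherwise, use the structure of \(H\) (via its tight cuts, or the bipartite case through Hall's condition on its two colour classes) to show that pairing two consecutive ears along \(Q\) gives ends \(u_1,v_1,u_2,v_2\) with \(H-\{u_1,v_1,u_2,v_2\}\) perfectly matchable. The \(H\)-segments of \(Q\) between them should supply this matching, after combining with \(M\). Then the same verification as above shows \(H+P_1+P_2\) is matching covered and conformal.

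If the minimality argument turns out too delicate, the fallback is to cite the classical Lovász--Plummer two-ear theorem directly, as the paper already does via \cite[Theorem~11.11]{LucchesiMurty2024}.
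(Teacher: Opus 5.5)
Your set-up and single-ear analysis are correct. The outside segments of $Q$ are odd ears of $H$ with non-$M$ end edges. $H+P$ is matching covered exactly when $H-u-v$ has a perfect matching, and conformality is inherited from $M$. Taking $(M\cap E(H'))\triangle Q$ settles $r=1$, and by the same computation with $H'=H+P_1+P_2$ it also settles $r=2$.

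The gap is the case $r\ge 3$. This is the whole content of the Lov\'asz--Plummer two-ear theorem, and you give no argument for it.

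\textbf{The minimality claim.} The claim that choosing $e,N$ to minimise $r$ forces $r\le 2$ is at least as strong as the lemma itself. You offer no rerouting that shortens a cycle with three or more outside segments.

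\textbf{The pairing mechanism.} This does not work as stated. $(M\cap E(H'))\triangle Q$ is a perfect matching of $H'$ only because $Q\subseteq H'$, i.e.\ only when $r=2$. For $r\ge 3$, the $H$-segment $S$ of $Q$ between $P_1$ and $P_2$ yields only $(M\cap E(H))\triangle S$, which misses just the two inner ends. Deleting the two outer ends as well would need a second $M$-alternating path inside $H$, disjoint from $S$. $Q$ does not provide one, because it closes up through $P_3,\dots,P_r$ outside $H$.

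\textbf{Pairing consecutive bad ears is not automatic.} Take $H$ to be the hexagon $1\,2\,3\,4\,5\,6$, $G=H+\{46,13,25\}$ (matching covered), $M=\{12,34,56\}$ and $Q=6\,4\,3\,1\,2\,5\,6$. The ears $64$ and $31$ are consecutive on $Q$ and both bad, yet $H-\{1,3,4,6\}$ has no perfect matching. Only the third ear $25$ works. So the hypothesis that no single ear is good must be used in an essential, global way. One possible route is the barrier structure of $H$: $H-u-v$ has no perfect matching if and only if $u,v$ lie in a common barrier. Your sketch does not do this; pointing to tight cuts or Hall's condition is not yet an argument.

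The paper does not prove this lemma. It quotes it from Lucchesi and Murty (Theorem~11.11 there), a result going back to Lov\'asz and Plummer. Your fallback of citing that theorem is exactly the paper's route and is fine. As a self-contained proof, however, the proposal is incomplete at the multi-ear step.
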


We finally recall that reversing the edges of a cut preserves Pfaffianity~\cite{CarvalhoLucchesiMurty2005}.

\begin{lemma}\label{lem:cut-reversal}
Let \(G\) be a graph with a perfect matching, \(D\) an orientation of \(G\), and \(X\subseteq V(G)\). Then \(D\) is Pfaffian if and only if \(D^X\) is Pfaffian.
\end{lemma}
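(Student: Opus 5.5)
The statement to prove is Lemma~\ref{lem:cut-reversal}. My plan is to compare the signs of a perfect matching in \(D\) and in \(D^X\) directly, and to show that they differ by a factor depending only on \(X\), not on the matching. Fix a perfect matching \(M\). Passing from \(D\) to \(D^X\) reverses exactly the edges of \(M\cap\partial(X)\). In the permutation \(\pi_D(M)\), reversing an edge swaps its tail and head, which is a transposition of two adjacent positions. Each reversal therefore multiplies the sign by \(-1\). Equivalently, by Lemma~\ref{lem:sign}, \(\crs(M)\) does not change and \(\rev_{D^X}(M)=\rev_D(M)+|M\cap\partial(X)|\) in \(\mathbb F_2\). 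Hence
\[
    \operatorname{sign}_{D^X}(M)=(-1)^{|M\cap\partial(X)|}\operatorname{sign}_D(M).
\]

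The key step is to show that the parity of \(|M\cap\partial(X)|\) is the same for every perfect matching \(M\). Count the vertices of \(X\) using \(M\). Each edge of \(M\) with both ends in \(X\) covers two vertices of \(X\). Each edge of \(M\cap\partial(X)\) covers exactly one vertex of \(X\). Since \(M\) covers every vertex exactly once, this gives \(|X| = 2|M\cap E(G[X])| + |M\cap\partial(X)|\), so \(|M\cap\partial(X)|\equiv|X|\pmod 2\).

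Therefore \(\operatorname{sign}_{D^X}(M)=(-1)^{|X|}\operatorname{sign}_D(M)\) for every \(M\in\cM(G)\), with a factor independent of \(M\). It follows that all perfect matchings have equal sign in \(D\) if and only if they have equal sign in \(D^X\). The only point requiring care is the claim that a reversal acts as a transposition, which is immediate from the definition of \(\pi_D(M)\). I do not expect any real obstacle.
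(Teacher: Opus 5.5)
Your proof is correct and follows essentially the same route as the paper. The paper also notes that every perfect matching meets \(\partial(X)\) in a number of edges congruent to \(|X|\) modulo \(2\), so cut reversal multiplies every sign by the common factor \((-1)^{|X|}\); you simply spell out the transposition step and the vertex-counting step explicitly.
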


To see this, every perfect matching contains a number of edges of \(\partial(X)\) having the same parity as \(|X|\). Thus, reversing these edges changes the signs of all perfect matchings by the same factor \((-1)^{|X|}\). We can now prove Theorem~\ref{thm:structural}.

\begin{proof}[Proof of Theorem~\ref{thm:structural}]
We first prove the forward implication. Suppose that \(D\) is non-Pfaffian. By Lemma~\ref{lem:conformal-cycle}, there exists an evenly oriented conformal cycle \(C\).

We first show that \(C\) can be made directed by reversing a suitable cut. Write \(C=v_1v_2\cdots v_{2k}v_1\) and fix this direction of traversal. Starting with \(X_0=\varnothing\), we consider the vertices \(v_2,v_3,\ldots,v_{2k}\) in order. At each vertex \(v_i\), if the edge \(v_{i-1}v_i\) points against the traversal, we switch \(v_i\) and add it to \(X_0\). Here switching a vertex means reversing all edges incident with it. In this way, every edge except possibly \(v_{2k}v_1\) points along the traversal. Each switch reverses exactly two edges of \(C\), so it preserves the parity of the number of edges directed against the traversal. Since \(C\) was evenly oriented, this number remains even. Thus the last edge must also point along the traversal, and \(C\) becomes a directed cycle. Since the vertices in \(X_0\) have each been switched exactly once, the resulting orientation is \(D^{X_0}\), as an edge is reversed precisely when exactly one of its endpoints belongs to \(X_0\).

Since \(C\) is a conformal matching covered subgraph of \(G\), by Lemma~\ref{lem:ear-decomposition}, there exists an ear decomposition starting with \(C\):
\[
C=H_0\subseteq H_1\subseteq\cdots\subseteq H_t=G,
\]
where each step adds either one odd ear or two vertex disjoint odd ears. We show by induction that, for each \(i\), there exists \(X_i\subseteq V(H_i)\) such that the restriction of \(D^{X_i}\) to \(H_i\) is strongly connected. This holds for \(i=0\), since \(C\) is directed in \(D^{X_0}\).

Suppose that the restriction of \(D^{X_i}\) to \(H_i\) is strongly connected. Let \(P\) be one of the one or two odd ears added in passing from \(H_i\) to \(H_{i+1}\). We show that each such ear can be made directed by switching only its internal vertices, without changing the orientation of \(H_i\).

Write \(P=v_0v_1\cdots v_{2k+1}\) and choose a direction of traversal such that an even number of its edges point against the traversal in \(D^{X_i}\). Such a direction always exists because \(P\) has odd length, so reversing the direction of traversal changes the parity of the number of edges directed against the traversal. Starting with \(Y_i=\varnothing\), we consider the internal vertices \(v_1,v_2,\ldots,v_{2k}\) in order. At each vertex \(v_j\), if the edge \(v_{j-1}v_j\) points against the traversal in the current orientation, we switch \(v_j\) and add it to \(Y_i\). In this way, every edge except possibly \(v_{2k}v_{2k+1}\) points along the traversal. Each switch reverses exactly two edges of \(P\), so the parity of the number of edges directed against the traversal remains even. Thus the last edge must also point along the traversal, and \(P\) becomes a directed path in \(D^{X_i\cup Y_i}\). Since only internal vertices were switched, neither endpoint was switched. This also covers an ear of length one, for which no switches are needed.

If two odd ears are added, we apply the same construction independently to both, taking \(Y_i\) to be the union of the two sets of switched vertices. Since all these vertices lie outside \(V(H_i)\), switching them does not change the orientation of any edge of \(H_i\).

Set \(X_{i+1}=X_i\cup Y_i\). We show that the restriction of \(D^{X_{i+1}}\) to \(H_{i+1}\) is strongly connected. The orientation of \(H_i\) remains unchanged, and every new vertex lies on a directed path whose endpoints belong to \(H_i\). From any vertex of \(H_i\), we can reach a new vertex by first reaching the initial endpoint of its ear and then following the directed path. Similarly, from any new vertex, we can reach every vertex of \(H_i\) by following the rest of its ear and then a directed path within \(H_i\). Since \(H_i\) is strongly connected, every pair of vertices of \(H_{i+1}\) is connected by directed paths in both directions. This completes the induction. Finally, \(H_t=G\), so \(D^{X_t}\) is strongly connected. This proves the forward implication.

For the converse, suppose that \(D^X\) is strongly connected for some \(X\subseteq V(G)\). We show that \(D\) is non-Pfaffian.

Since \(D^X\) is strongly connected, every directed edge belongs to a directed cycle. For each edge \(f\in E(G)\), choose a directed cycle \(C_f\) containing \(f\), and assign to each edge \(e\) the weight
\[
w_e=\bigl|\{f\in E(G):e\in E(C_f)\}\bigr|.
\]
Since \(e\in E(C_e)\), we have \(w_e\geq 1\) for every edge \(e\). Now fix a vertex \(v\). Each chosen directed cycle either avoids \(v\) or contains exactly one edge entering \(v\) and one edge leaving \(v\). Since \(w_e\) counts how many chosen cycles contain \(e\), summing over all these cycles gives the same total weight to the incoming and outgoing edges at \(v\). Let \(A=A_{D^X}(w)\) be the weighted skew-adjacency matrix. For every vertex \(v\in V(G)\), we have
\[
(A\mathbf{1})_v
=\sum_{\substack{e\text{ outgoing}\\\text{from }v}}w_e
-\sum_{\substack{e\text{ incoming}\\\text{to }v}}w_e
=0.
\]
Thus \(A\mathbf{1}=0\), so \(A\) is singular and \(\det A=0\).

Suppose that \(D\) is Pfaffian. By Lemma~\ref{lem:cut-reversal}, \(D^X\) is also Pfaffian. Therefore,
\[
\operatorname{Pf}(A)
=\pm\sum_{M\in\mathcal M(G)}
\prod_{e\in M}w_e.
\]
Since \(G\) has a perfect matching and all weights are positive, the sum on the right is strictly positive, so \(\operatorname{Pf}(A)\ne0\). By Cayley's identity,
\[
\det A=\operatorname{Pf}(A)^2>0,
\]
contradicting \(\det A=0\). Therefore \(D\) is non-Pfaffian.
\end{proof}

This gives the following algorithmic consequence.

\begin{corollary}\label{cor:strong-pushing}
Given an orientation \(D\) of a matching covered graph \(G\), there is a randomized polynomial time algorithm to decide whether some \(X\subseteq V(G)\) makes \(D^X\) strongly connected. If \(G\) is bipartite or planar, the problem can be solved in deterministic polynomial time.
\end{corollary}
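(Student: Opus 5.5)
By Theorem~\ref{thm:structural}, for a matching covered graph \(G\) with orientation \(D\), the existence of \(X\subseteq V(G)\) making \(D^X\) strongly connected is equivalent to \(D\) being non-Pfaffian. The corollary therefore reduces to testing whether a prescribed orientation is Pfaffian: we answer ``yes'' exactly when \(D\) is non-Pfaffian. The plan is to verify the hypotheses of the characterization (which are given: \(G\) matching covered, in particular \(G\) has a perfect matching) and then plug in the appropriate test for each graph class.

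For general matching covered graphs, we run the test of Theorem~\ref{thm:testing}, or the faster version in Corollary~\ref{cor:cubic-testing}, on \(D\). If the test rejects, \(D\) is certainly non-Pfaffian, since Pfaffian orientations are always accepted. By Theorem~\ref{thm:structural} we then correctly report that a cut \(X\) exists. If \(D\) is non-Pfaffian, the test rejects with probability at least \(1/2\), so this is a one-sided error randomized polynomial time algorithm, with error reducible by repetition. If an explicit witness \(X\) is wanted, we first use Theorem~\ref{thm:witness-recovery} to obtain two perfect matchings of opposite signs. Their symmetric difference contains an evenly oriented conformal cycle, which we can extract by checking each alternating cycle. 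We then follow the constructive forward direction of the proof of Theorem~\ref{thm:structural}: switch vertices along the cycle, compute an ear decomposition (polynomial time by standard algorithms), and switch internal vertices of each ear. The main point requiring care is that the ear decomposition starting from \(C\) can be found efficiently. I would cite the algorithmic form of Lemma~\ref{lem:ear-decomposition}; for the decision version alone this is not needed.

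For the deterministic cases, we need a deterministic test of whether the given \(D\) is Pfaffian. For bipartite \(G\), we use the Robertson--Seymour--Thomas / McCuaig algorithm. It either outputs a Pfaffian orientation \(D_0\) or certifies that \(G\) is non-Pfaffian; in the latter case \(D\) is non-Pfaffian. For planar \(G\), Kasteleyn's algorithm produces a Pfaffian orientation \(D_0\) directly.

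Given a Pfaffian \(D_0\), we must decide whether \(D\) is Pfaffian. This is the main obstacle. I would use the known fact that in a matching covered graph any two Pfaffian orientations differ by reversing a cut. Combined with Lemma~\ref{lem:cut-reversal}, this reduces the question to checking whether the set of edges on which \(D\) and \(D_0\) disagree is a cut \(\partial(X)\). That is a parity/2-colouring check, solvable in linear time by BFS: assign each vertex a side so that disagreeing edges cross and agreeing edges do not, and see whether this is consistent.

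An alternative avoiding that fact is to compute \(\det A_D\) and \(\det A_{D_0}\) at unit weights over the rationals and compare \(|\Pf|\). This requires \(|\Pf(A_D)|=|\Pf(A_{D_0})|\) to imply that \(D\) is Pfaffian, which fails in general because of cancellation. So the cut-difference argument is the route I would take, citing it from the literature.
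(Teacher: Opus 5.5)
Your reduction through Theorem~\ref{thm:structural} and your randomized branch coincide with the paper, and your bipartite branch is sound. The planar branch, however, rests on a false claim: it is not true that any two Pfaffian orientations of a matching covered graph differ by reversing a cut. By Lemma~\ref{lem:conformal-cycle}, if $D_0$ is Pfaffian, then $D$ is Pfaffian exactly when the set $S$ of edges where $D$ and $D_0$ disagree meets every conformal even cycle in an even number of edges. Cuts have this property. But $S$ is forced to be a cut only when the conformal cycles span the cycle space over $\mathbb F_2$.

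That spanning property holds for bipartite matching covered graphs, since each single ear of an ear decomposition yields a new conformal cycle through it. This is why your bipartite argument survives; the paper instead cites the Robertson--Seymour--Thomas and McCuaig algorithms as testing the prescribed orientation directly. The property fails for nonbipartite graphs. In $K_4$ the only conformal even cycles are the three $4$-cycles, and each meets any triangle in exactly two edges. So reversing a triangle in a Pfaffian orientation of $K_4$ gives another Pfaffian orientation. Yet a triangle is not a cut, since a cut meets every cycle, including the triangle itself, in an even number of edges. On this planar matching covered instance your 2-colouring test would declare the Pfaffian $D$ non-Pfaffian. It would then wrongly report that some $D^X$ is strongly connected, contradicting Theorem~\ref{thm:structural}.

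The alternative you discarded is exactly the paper's argument, and your cancellation objection to it does not apply. With unit weights,
\[
\Pf(A_D(\mathbf{1}))=\sum_{M\in\cM(G)}\operatorname{sign}_D(M),
\]
so $|\Pf(A_D(\mathbf{1}))|\le|\cM(G)|$, with equality if and only if all perfect matchings have the same sign. Cancellation cannot produce equality with $|\Pf(A_{D_0}(\mathbf{1}))|$, because the Kasteleyn orientation $D_0$ already attains the maximum value $|\cM(G)|$. Hence $D$ is Pfaffian if and only if $\det A_D(\mathbf{1})=\det A_{D_0}(\mathbf{1})$. This can be decided by exact integer determinant computation, which repairs the planar case.
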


\begin{proof}
By Theorem~\ref{thm:structural}, such a set \(X\) exists if and only if \(D\) is non-Pfaffian. Thus Theorem~\ref{thm:testing} gives a randomized polynomial time algorithm.

For bipartite graphs, deterministic polynomial time algorithms for testing Pfaffian orientations are known~\cite{RST1999,McCuaig2004}. For planar graphs, the FKT algorithm computes the number of perfect matchings in polynomial time~\cite{Kasteleyn1963}. Comparing this number with \(\lvert\operatorname{Pf}(A_D(\mathbf{1}))\rvert\) determines whether \(D\) is Pfaffian. Hence the problem is solvable in deterministic polynomial time in both cases.
\end{proof}

\section{Open questions}

The main open question is whether Pfaffian graph recognition admits a deterministic polynomial time algorithm. One possible approach is to determine whether the determinant representation of the wrong sign polynomial in \eqref{eq:wrong-sign-poly} admits deterministic polynomial identity testing. Another approach is to use our structural characterization in Theorem~\ref{thm:structural} to decide deterministically whether an orientation of a matching covered graph can be made strongly connected by reversing a cut. It would also be interesting to determine whether the algebraic approach of Section~\ref{sec:wrong-sign} can be applied directly to graph recognition, without first constructing the orientation of Vazirani and Yannakakis.

\section*{AI Disclosure}
The mathematical results were obtained independently by the authors. ChatGPT (OpenAI) was used only for grammar and language corrections in the manuscript. The authors take full responsibility for the correctness of the results and references.

\bibliographystyle{alphaurl}
\bibliography{references}

\end{document}